\documentclass{article}
\usepackage{euscript,amsmath,amssymb,amsfonts,graphicx,bm,amsthm,mathtools}
\usepackage{cite}
%\usepackage{pgfplots}
%\pgfplotsset{width=6.5cm,compat=1.3}
%\usepackage{subcaption}
%  \usepackage{paralist}
%  \usepackage{epstopdf}
%  \usepackage{graphics} 
% \usepackage[colorlinks=true]{hyperref}
  %\usepackage{subfigure}
%\numberwithin{equation}{section}

%\usepackage{graphicx,bm}
  \usepackage{paralist}
  \usepackage{graphics} 
\usepackage[caption=false]{subfig}
\usepackage{soul}

\usepackage{latexsym,amssymb,enumerate,amsmath,amsthm} 
  \usepackage{graphicx} 
  \usepackage{tikz}
     \usepackage[colorlinks=false]{hyperref}
 \hypersetup{urlcolor=blue, citecolor=red}
\usepackage{latexsym,amssymb,enumerate,amsmath} 
\usepackage{pgfplots}
\usepackage{soul,xcolor}

%%%%%%%%%%%%%%%%%%%%%%%%%%%%%%%%
\newcommand{\dd}{\textup{d}}
\def\eps{\varepsilon}
\def\E{\mathbb{E}}
\def\P{\mathbb{P}}
\def\R{\mathbb{R}}

\def\u{\mathbf{u}}
\def\w{\mathbf{w}}
\def\f{\mathbf{f}}
\def\q{\mathbf{q}}
\def\p{\mathbf{p}}
\def\K{\mathbf{K}}
\def\M{\mathbf{M}}
\def\e{\mathbf{e}}
\def\D{\prescript{}{0}D_{t}^{1-\alpha}}
\def\DD{\mathcal{D}}
\def\L{\mathcal{L}}
\def\RR{R}
\newcommand{\Markov}[2]{\underset{#1}{\overset{#2}{\rightleftharpoons}}}
%%%%%%%%%%%%%%%%%%%%%%%%%%%%%%%%

\newtheorem{theorem}{Theorem}
\newtheorem{lemma}[theorem]{Lemma}

\newtheorem{corollary}[theorem]{Corollary}
\theoremstyle{plain}
\theoremstyle{remark}

\theoremstyle{definition}

\begin{document}

% Use the \preprint command to place your local institutional report
% number in the upper righthand corner of the title page in preprint mode.
% Multiple \preprint commands are allowed.
% Use the 'preprintnumbers' class option to override journal defaults
% to display numbers if necessary
%\preprint{}

%Title of paper
\title{Reaction-subdiffusion equations with species-dependent movement\thanks{The authors were supported by the National Science Foundation (DMS-1944574 and DMS-1814832).}}

% repeat the \author .. \affiliation  etc. as needed
% \email, \thanks, \homepage, \altaffiliation all apply to the current
% author. Explanatory text should go in the []'s, actual e-mail
% address or url should go in the {}'s for \email and \homepage.
% Please use the appropriate macro foreach each type of information

% \affiliation command applies to all authors since the last
% \affiliation command. The \affiliation command should follow the
% other information
% \affiliation can be followed by \email, \homepage, \thanks as well.
\author{Amanda M. Alexander\thanks{Department of Mathematics, University of Utah, Salt Lake City, UT 84112 USA.} \and Sean D. Lawley\thanks{Department of Mathematics, University of Utah, Salt Lake City, UT 84112 USA (\texttt{lawley@math.utah.edu}).}
}
\date{\today}
\maketitle

\begin{abstract}
Reaction-diffusion equations are one of the most common mathematical models in the natural sciences and are used to model systems that combine reactions with diffusive motion. However, rather than normal diffusion, anomalous subdiffusion is observed in many systems and is especially prevalent in cell biology. What are the reaction-subdiffusion equations describing a system that involves first-order reactions and subdiffusive motion? In this paper, we answer this question. We derive fractional reaction-subdiffusion equations describing an arbitrary number of molecular species which react at first-order rates and move subdiffusively with general space-dependent diffusivities and drifts. Importantly, different species may have different diffusivities and drifts, which contrasts previous approaches to this question which assume that each species has the same movement dynamics. We derive the equations by combining results on time-dependent fractional Fokker-Planck equations with methods of analyzing stochastically switching evolution equations. Furthermore, we construct the stochastic description of individual molecules whose deterministic concentrations follow these reaction-subdiffusion equations. This stochastic description involves subordinating a diffusion process whose dynamics are controlled by a subordinated Markov jump process. We illustrate our results in several examples and show that solutions of the reaction-subdiffusion equations agree with stochastic simulations of individual molecules.
\end{abstract}

% 60J60  	Diffusion processes 
% 35R11  	Fractional partial differential equations 
% 92C40  	Biochemistry, molecular biology

%%%%%%%%%%%%%%%%%%%%%%%%%%%%%%%%%%%%%%%%%%%%%%%%%%%%%%%%%%%%%%%%%%%%%%%%%%%%%%%%%%%%%%%%%%%%%%%%%%%%%%%%%%%%%%%%%%%%%%%%%%%%%%%%%%%%%%%%%%%%%%%%%%%%%%%%%%%%%%%%%%%%%%%%%%%%%%%%%%%%%%%%%%%%%%%%%%%%%%%%%%%%%%%%%%%%%%%%%%%%%%%%%%%%%%%%%%%%%%%%%%%%%%%%%%%%
%%%%%%%%%%%%%%%%%%%%%%%%%%%%%%%%%%%%%%%%%%%%%%%%%%%%%%%%%%%%%%%%%%%%%%%%%%%%%%%%%%%%%%%%%%%%%%%%%%%%%%%%%%%%%%%%%%%%%%%%%%%%%%%%%%%%%%%%%%%%%%%%%%%%%%%%%%%%%%%%%%%%%%%%%%%%%%%%%%%%%%%%%%%%%%%%%%%%%%%%%%%%%%%%%%%%%%%%%%%%%%%%%%%%%%%%%%%%%%%%%%%%%%%%%%%%%%%%%%%%%%%%%%%%%%%%%%%%%%%%%%%%%%%%%%%%%%%%%%%%%%%%%%%%%%%%%%%%%%%%%%%%%%%%%%%%%%%%%%%%%%%%%%%%%%%%%
\section{\label{intro}Introduction}

Reaction-diffusion equations are a fundamental class of mathematical models which are used in many areas of science. Such equations describe systems that combine reactions with undirected spatial movement modeled by diffusion. The equations specify the continuous spatiotemporal evolution of molecules in different discrete states. Depending on the application, the ``molecules'' in different ``states'' could model, for example, morphogens of different types \cite{turing1952}, proteins in different conformations \cite{wu2018, PB13}, cells in different cancer stages \cite{gatenby1996, mcgillen2014}, frequencies of different genes in a population \cite{kimura1964}, different cell types in wound healing \cite{sherratt1990, flegg2015}, different enzymes in blood clotting \cite{galochkina2017}, different animal species \cite{holmes1994, cantrell2004}, animals or humans in different disease states \cite{murray1986}, etc.

A system of $n\ge1$ reaction-diffusion equations in $d$-dimensional space takes the following form,
\begin{align}\label{rd}
\frac{\partial}{\partial t}\p
=\K\Delta \p
+\f(\p),\quad x\in\R^{d},\,t>0.
\end{align}
Here, $\p=\p(x,t)=(\p_{i}(x,t))_{i=0}^{n-1}$ denotes an $n$-dimensional vector whose $i$th component, $\p_{i}(x,t)$, denotes the concentration of molecules in discrete state $i\in\{0,\dots,n-1\}$ at position $x\in\R^{d}$ at time $t\ge0$. The first term in the righthand side of \eqref{rd} describes  movement by diffusion, where 
\begin{align}\label{K}
\K
=\textup{diag}(K_{0},K_{1},\dots,K_{n-1})
=  \begin{pmatrix}
    K_{0} & 0 & \dots & 0 \\
    0 & K_{1} & \dots & 0 \\
    \vdots & \vdots & \ddots & \vdots \\
    0 & 0 & \dots & K_{n-1}
  \end{pmatrix}\in\R^{n\times n}
\end{align}
is the diagonal matrix whose $i$th diagonal entry, $K_{i}>0$, is the diffusion coefficient of molecules in state $i$. The second term in the righthand side of \eqref{rd} describes reactions, whereby the concentrations in the various states can grow or decay. If the reactions are first-order, then the reaction term is the linear function,
\begin{align}\label{linear0}
\f(\p)
={\RR}\p,
\end{align}
where ${\RR}\in\R^{n\times n}$ is a matrix of reaction rates. Furthermore, nonlinear reaction terms are often replaced by a linearization of the form \eqref{linear0} in order to study the stability of steady-states. Indeed, analyzing spatial patterns in reaction-diffusion systems involving chemical states with distinct diffusion coefficients remains an active area of research nearly seven decades after Alan Turing's seminal work \cite{turing1952, maini2012, landge2020}.

From the perspective of a single molecule, a signature of diffusion is a mean-squared displacement that grows linearly in time. That is, if $X(t)\in\R^{d}$ denotes the position of a diffusing molecule at time $t\ge0$, then 
\begin{align}\label{linear}
\E\big[\|X(t)-X(0)\|^{2}\big]
\propto t.
\end{align}
If the mean-squared displacement deviates from the linear growth in \eqref{linear}, then the motion is called anomalous diffusion. If the mean-squared displacement of the position of a molecule $Y(t)\in\R^{d}$ grows according to the sublinear power law,
\begin{align*}
\E\big[\|Y(t)-Y(0)\|^{2}\big]
\propto t^{\alpha},\quad \alpha\in(0,1),
\end{align*}
then the motion is called subdiffusion. Subdiffusion has been observed in many systems \cite{oliveira2019, klafter2005, barkai2012, sokolov2012} and is especially prevalent in cell biology \cite{golding2006, hofling2013}.

A common model of subdiffusion is a fractional diffusion equation \cite{metzler1999},
\begin{align*}
\frac{\partial}{\partial t}q
=K\Delta \D q,\quad x\in\R^{d},\,t>0,
\end{align*}
where $K>0$ is the (generalized) diffusion coefficient or diffusivity (with dimension $[K] = (\textup{length})^{2}( \textup{time})^{-\alpha}$) and $\D$ denotes the Riemann-Liouville time-fractional derivative \cite{samko1993} defined by
\begin{align}\label{rl}
\D q(x,t)
=\frac{\partial}{\partial t}\int_{0}^{t}\frac{1}{\Gamma(\alpha)(t-t')^{1-\alpha}}q(x,t')\,\dd t',\quad \alpha\in(0,1).
\end{align}
In contrast to normal diffusion, reactions cannot be incorporated into subdiffusion equations by merely adding in reaction terms. Indeed, though one might posit the following reaction-subdiffusion equation to describe molecules that subdiffuse with diffusivity $K>0$ and degrade at rate $\lambda>0$,
\begin{align}\label{naive}
\frac{\partial}{\partial t}q
=K\Delta \D q
-\lambda q,\quad x\in\R^{d},\,t>0,
\end{align}
 this equation leads to the unphysical result of a negative concentration, $q<0$ \cite{henry2006}.

What is the analog of the classical reaction-diffusion equations in \eqref{rd} with first-order reactions for the case of subdiffusion? In this paper, we answer this question. We derive the following reaction-subdiffusion equations, 
\begin{align}\label{mrf0}
\frac{\partial}{\partial t}\q
=\K\Delta e^{{\RR}t}\D(e^{-{\RR}t}\q)+{\RR}\q,\quad x\in\R^{d},\,t>0,
\end{align}
where $\q=\q(x,t)=(\q_{i}(x,t))_{i=0}^{n-1}$ is the vector of molecular concentrations, $\K$ is the diagonal matrix in \eqref{K} where $K_{i}$ is the diffusivity of molecules in state $i$, ${\RR}\in\R^{n\times n}$ is the reaction-rate matrix as in \eqref{linear0}, and $e^{\pm {\RR}t}$ is the matrix exponential.

Importantly, \eqref{mrf0} allows different molecular species to have different diffusivities. Previous derivations of reaction-subdiffusion equations with first-order reactions have assumed that all molecular species have the same diffusivity (i.e.\ $K_{i}=K_{j}$ for all $i,j$) \cite{sokolov2006, henry2006, schmidt2007, langlands2008, lawley2020sr1}. For the case that different species have different diffusivities, a set of reaction-subdiffusion equations that differ from \eqref{mrf0} were posited in the review paper \cite{nepomnyashchy2016} but were not derived. The equations in \eqref{mrf0} were posited in \cite{yang2021} without derivation. See the Discussion section below for more details.

In addition, we derive evolution equations for the case that (i) each molecular species moves with their own space-dependent diffusivity and space-dependent drift and (ii) the subdiffusion is described by a more general fractional derivative than \eqref{rl}. These equations have the same form as \eqref{mrf0}, except $\K\Delta$ is replaced by an operator with Fokker-Planck operators along the diagonal and $\D$ is replaced by a more general fractional operator (see \eqref{mrf} for a precise statement). We obtain these results by combining results on time-dependent fractional Fokker-Planck equations \cite{magdziarz2016, carnaffan2017} with methods of analyzing stochastically switching evolution equations \cite{lawley15sima, PB1, lawley16bvp}.

Furthermore, we find the stochastic description of individual molecules whose deterministic concentrations follow these reaction-subdiffusion equations. To construct this stochastic representation, we first subordinate a Markov jump process according to a L{\'e}vy subordinator. We then define a diffusion process whose drift and diffusivity switch according to this subordinated jump process. Subordinating this diffusion process according to the inverse of the L{\'e}vy subordinator finally yields the subdiffusing and reacting stochastic process.

The rest of the paper is organized as follows. We begin in section~\ref{prelim} by reviewing some previous results on subdiffusion equations and their stochastic representation. In section~\ref{equations}, we derive the reaction-subdiffusion equations. In section~\ref{rep}, we find the corresponding stochastic description of individual molecules. In section~\ref{examples}, we illustrate our results in several examples and show the agreement between solutions of the reaction-subdiffusion equations and stochastic simulations of individual subdiffusing and reacting molecules. We conclude by discussing relations to prior work, including (i) previous approaches to finding reaction-subdiffusion equations with first-order reactions  and (ii) the so-called subdiffusion-limited model of reaction-subdiffusion.

%%%%%%%%%%%%%%%%%%%%%%%%%%%%%%%%%%%%%%%%%%%%%%%%%%%%%%%%%%%%%%%%%%%%%%%%%%%%%%%%%%%%%%%%%%%%%%%%%%%%%%%%%%%%%%%%%%%%%%%%%%%%%%%%%%%%%%%%%%%%%%%%%%%%%%%%%%%%%%%%%%%%%%%%%%%%%%%%%%%%%%%%%%%%%%%%%%%%%%%%%
\section{Preliminaries}\label{prelim}

We begin by reviewing some results about stochastic representations of subdiffusive processes modeled by fractional Fokker-Planck equations. Let $Y=\{{{Y}}(t)\}_{t\ge0}$ be the position of a $d$-dimensional subdiffusive molecule with $d\ge1$. Let ${{q}}(x,t)$ be the probability density that ${{Y}}(t)=x\in\R^{d}$,
\begin{align*}
{{q}}(x,t)\,\dd x
=\P({{Y}}(t)=\dd x).
\end{align*}
Suppose that this density satisfies the fractional Fokker-Planck equation, 
\begin{align}\label{ffpe}
\begin{split}
\frac{\partial}{\partial t}{{q}}
&=\L\DD {{q}},\quad y\in\R^{d},\,t>0, 
\end{split}
\end{align}
where $\L$ is the time-dependent forward Fokker-Planck operator,
\begin{align}\label{lop}
\L f
:=-\sum_{l=1}^{d}\frac{\partial}{\partial x_{l}}\Big[\overline{b}_{l}(x,t)f\Big]
+\frac{1}{2}\sum_{l=1}^{d}\sum_{k=1}^{d}\frac{\partial^{2}}{\partial x_{l}\partial x_{k}}
\Big[\big({\overline{{{\sigma}}}}(x,t){\overline{{{\sigma}}}}(x,t)^{\top}\big)_{l,k}f\Big],
\end{align}
where the drift and diffusivity depend on space and time,
\begin{align}\label{bars}
\overline{b}(x,t):\R^{d}\times[0,\infty)\mapsto\R^{d},\quad
{\overline{{{\sigma}}}}(x,t):\R^{d}\times[0,\infty)\mapsto\R^{d\times m},
\end{align}
and $\DD$ is fractional time derivative,
\begin{align}\label{DD}
(\DD q)(x,t)
:=\frac{\partial}{\partial t}\int_{0}^{t}M(t-t')q(x,t')\,\dd t',
\end{align}
for some memory kernel $M(t)$. Note that $\DD$ is the Riemann-Liouville fractional derivative $\D$ in \eqref{rl} if the memory kernel is
\begin{align}\label{mrl}
M(t)=(\Gamma(\alpha)t^{1-\alpha})^{-1}\quad\text{for }\alpha\in(0,1).
\end{align}

The subdiffusive process $Y$ whose probability density satisfies \eqref{ffpe} can be written as a random time change of a diffusive process satisfying an It\^{o} stochastic differential equation. Specifically, let $T=\{T(s)\}_{s\ge0}$ be a driftless L\'{e}vy subordinator, meaning $T$ is a one-dimensional, nondecreasing pure jump L\'{e}vy process with $T(0)=0$ \cite{bertoin1996, sato1999}. Let $\Phi(\lambda)$ denote the Laplace exponent of $T$, which means 
\begin{align}\label{le}
\E[e^{-\lambda T(s)}]
&=e^{-s\Phi(\lambda)},\quad
\Phi(\lambda)
=\int_{0}^{\infty}(1-e^{-\lambda z})\,\nu(\dd z),\quad\textup{for all }s,\lambda\ge0,
\end{align}
where $\nu$ is the L\'{e}vy measure of $T$. Let $S=\{S(t)\}_{t\ge0}$ be the inverse subordinator, 
\begin{align}\label{S}
S(t)
:=\inf\{s>0:T(s)>t\}.
\end{align}

Let $X=\{{{X}}(s)\}_{s\ge0}$ be a $d$-dimensional diffusion process satisfying the It\^{o} stochastic differential equation,
\begin{align}\label{sde}
\dd {{X}}(s)
=\overline{b}({{X(s),{{T}}(s)}})\,\dd s+{\overline{{{\sigma}}}}({{X(s),{{T}}(s)}})\,\dd W(s),
\end{align}
where $\{W(s)\}_{s\ge0}\in\R^{m}$ is a standard Brownian motion independent of ${{{{T}}}}$. Note that $T(s)$, $W(s)$, and $X(s)$ are indexed by the ``internal time'' $s\ge0$, which is not real, physical time, and in fact has dimension $[s]=[\Phi(t)]^{-1}$, where $t$ is a physical time.

We then define the subdiffusive process $Y$ as a random time change of $X$,
\begin{align}\label{timechange}
{{Y}}(t)
:={{X}}({{S}}(t)),\quad t\ge0.
\end{align}
If the Laplace transform of the memory kernel $M$ in \eqref{DD} is the reciprocal of the Laplace exponent of $T$ in \eqref{le},
\begin{align}\label{mphi}
\widehat{M}({\lambda})
:=\int_{0}^{\infty}e^{-{\lambda} t}M(t)\,\dd t
=\frac{1}{\Phi({\lambda})}
=\frac{-1}{\ln\E[e^{-{\lambda} T(1)}]},\quad {\lambda}>0,
\end{align}
and the probability density of $Y$ in \eqref{timechange} exists, then it satisfies the fractional Fokker-Planck equation in \eqref{ffpe} under some mild assumptions on the coefficients $\overline{b}(x,t)$ and ${\overline{{{\sigma}}}}(x,t)$ (see Theorem 2.1 in \cite{carnaffan2017} for a precise statement). Note that for the Riemann-Liouville fractional derivative $\D$ in \eqref{rl} with memory kernel $M$ in \eqref{mrl}, $T$ is an $\alpha$-stable subordinator with $\Phi(\lambda)=\lambda^{\alpha}$.

Importantly, note that $\L$ in \eqref{lop} is the forward Fokker-Planck operator corresponding to \eqref{sde}, except that the time arguments of $\overline{b}(x,t)$ and ${\overline{{{\sigma}}}}(x,t)$ in \eqref{sde} are evaluated at $t={{T}}(s)$. The fact that time-dependent drift and diffusivity ($\overline{b}(x,t)$ and ${\overline{{{\sigma}}}}(x,t)$) are incorporated into the subdiffusive process $Y(t)$ by including them in the dynamics of the diffusive process $X(s)$ with time argument $t={{T}}(s)$ plays an important role in our construction in section~\ref{rep} below.

%%%%%%%%%%%%%%%%%%%%%%%%%%%%%%%%%%%%%%%%%%%%%%%%%%%%%%%%%%%%%%%%%%%%%%%%%%%%%%%%%%%%%%%%%%%%%%%%%%%%%%%%%%%%%%%%%%%%%%%%%%%%%%%%%%%%%%%%%%%%%%%%%%%%%%%%%%%%%%%%%%%%%%%%%%%%%%%%%%%%%%%%%%%%%%%%%%%%%%%%%%%%%%%%%%%%%%%%%%%%%%%%%%%%%%%%%%%%%%%%%%%%%%%%%%%%%%%%%%%%%%%%%%%%%%%%%%%%%%%%%%%%%%%%%%%%%%%%%%%%%%%%%%%%%%%%%%%%%%%%%%%%%%%%%%%%%%%%%%%%%%%%%%%%%%%%%%%%%%%%%%%%%%%%%%%%%%%%%%%%%%%%%%%%%%%%%%%%%%%%%%%%%%%%%%%%%%%%%%%%%%%%%%%%%%%%%%%%%%%%%%%%%%%%
\section{Fractional reaction-subdiffusion equations}\label{equations}

In this section, we derive fractional reaction-subdiffusion equations describing a population of molecules that (i) stochastically transition (react) between discrete states and (ii) subdiffuse with dynamics that depend on the discrete state. To derive these equations, we first consider the fractional Fokker-Planck equation of a single molecule given a fixed realization of the discrete stochastic transitions. We then average over paths of the discrete transition process to arrive at the fractional reaction-subdiffusion equations.

%%%%%%%%%%%%%%%%%%%%%%%%%%%%%%%%%%%%%%%%%%%%%%%%%%%%%%%%%%%%%%%%%%%%%%%%%%%%%%%%%%%%%%%%%%%%%%%%%%
\subsection{Setup}

To describe the discrete state of a single molecule, let $J=\{J(t)\}_{t\ge0}$ be a continuous-time Markov jump process on the finite state space $\{0,\dots,n-1\}$ with infinitesimal generator ${\RR}^{\top}\in \R^{n\times n}$ \cite{norris1998} (the superscript denotes transpose). Recall that this means ${\RR}_{i,j}\ge0$ is the rate that $J$ jumps from $j$ to $i$ for $i\neq j$ and the diagonal entries are chosen so that ${\RR}$ has zero column sums.

To describe the $J$-dependent, subdiffusive dynamics of this single molecule, suppose that the probability density of its position follows a fractional Fokker-Planck equation with drift and diffusivity that depend on $J$. Specifically, for each realization of $J$, suppose that the probability density $q(x,t)$ of the position of the molecule satisfies the following fractional Fokker-Planck equation away from jump times of $J$,
\begin{align}\label{ffpesw}
\frac{\partial}{\partial t}{{q}}
&=\L_{J(t)}\DD {{q}},\quad x\in\R^{d},\,t>0,
\end{align}
where $\{\L_{j}\}_{j=1}^{n}$ are $n$ time-independent forward Fokker-Planck operators,
\begin{align}\label{fop}
\L_{j} f
:=-\sum_{l=1}^{d}\frac{\partial}{\partial x_{l}}\big[b_{l}(x,j)f\big]
+\frac{1}{2}\sum_{l=1}^{d}\sum_{k=1}^{d}\frac{\partial^{2}}{\partial x_{l}\partial x_{k}}
\Big[\big({{{\sigma}}}(x,j){{{\sigma}}}(x,j)^{\top}\big)_{l,k}f\Big], 
\end{align}
where the drift and diffusivity depend on $x\in\R^{d}$ and $j\in\{0,\dots,n-1\}$, 
\begin{align}\label{coeff}
b(x,j):\R^{d}\times\{0,\dots,n-1\}\mapsto\R^{d},\quad
{{\sigma}}(x,j):\R^{d}\times\{0,\dots,n-1\}\mapsto\R^{d\times m}.
\end{align}

Note that \eqref{ffpesw} is a \emph{stochastic} fractional Fokker-Planck equation, since it depends on the stochastic path of the jump process $J=\{J(t)\}_{t\ge0}$. In particular, the sources of randomness in the problem are (i) the subdiffusive process (which as in section~\ref{prelim}, is generated by a subordinator $T$ and a Brownian motion $W$) and (ii) the jump process $J$. Equation~\eqref{ffpesw} has averaged out the stochasticity stemming from the subdiffusive process and retained the stochasticity from the jump process. Equation~\eqref{ffpesw} is similar to the randomly switching parabolic equations studied in \cite{lawley15sima, PB1, lawley16bvp}.

The stochastic fractional Fokker-Planck equation in \eqref{ffpesw} is analogous to the time-dependent fractional Fokker-Planck equation in \eqref{ffpe}. The key distinction between \eqref{ffpe} and \eqref{ffpesw} is that the time-dependence of the drift and diffusivity in \eqref{ffpe} are given deterministic functions ($\overline{b}$ and $\overline{\sigma}$ in \eqref{bars}), whereas the time-dependence of the drift and diffusivity in \eqref{ffpesw} stems from the stochastic path of the jump process $J$. Nevertheless, given a realization of $J$, \eqref{ffpesw} has exactly the same form as \eqref{ffpe}, except that the drift and diffusivity in \eqref{ffpesw} are discontinuous in time when $J$ jumps and are constant in time otherwise.

%%%%%%%%%%%%%%%%%%%%%%%%%%%%%%%%%%%%%%%%%%%%%%%%%%%%%%%%%%%%%%%%%%%%%%%%%%%%%%%%%%%%%%%%%%%%%%%%%%
\subsection{Derivation of reaction-subdiffusion equations}

To derive the fractional reaction-subdiffusion equations corresponding to \eqref{ffpesw}, we average over paths of $J$. Toward this end, define the deterministic vector-valued function $\q(x,t)=(\q_{i}(x,t))_{i=0}^{n-1}\in\R^{n}$, where the $i$th component is
\begin{align*}
\q_{i}(x,t)
:=\E[q(x,t)1_{\{J(t)=i\}}],\quad i\in\{0,\dots,n-1\},
\end{align*}
where $1_{\{A\}}$ denotes the indicator function on an event $A$, meaning $1_{\{A\}}=1$ if $A$ occurs and $1_{\{A\}}=0$ otherwise. Since $q(x,t)$ is the density of $Y(t)$ given a realization of $J$, it follows that $\q_{i}(x,t)$ is the density of the joint process $(Y(t),J(t))$, 
\begin{align*}
\P(Y(t)=\dd x,\,J(t)=i)
=\E[1_{\{J(t)=i\}}\E[1_{\{Y(t)=\dd x\}}\,|\,J]]
&=\E[1_{\{J(t)=i\}}q(x,t)\,\dd x]\\
&=\q_{i}(x,t)\,\dd x.
\end{align*}
If $\e_{i}\in\R^{n}$ denotes the standard basis vector with a 1 in its $i$th component and zeros elsewhere, then observe that we may write $\q$ as
\begin{align*}
\q(x,t)
=(\q_{i}(x,t))_{i=0}^{n-1}
=\E[q(x,t)\e_{J(t)}]\in\R^{n}.
\end{align*}

To derive evolution equations for $\q$, define 
\begin{align*}
\u(x,t',t)
:=\E[q(x,t')\e_{J(t)}],\quad x\in\R^{d},\,0<t'\le t.
\end{align*}
Since $\q(x,t)=\u(x,t,t)$, the multivariable chain rule implies
\begin{align}\label{chainrule0}
\frac{\partial}{\partial t}\q(x,t)
=\frac{\partial}{\partial t'}\u(x,t',t)\Big|_{t'=t}
+\frac{\partial}{\partial t}\u(x,t',t)\Big|_{t'=t}.
\end{align}
Hence, it remains to compute the $t'$ and $t$ derivatives of $\u$. The following lemma computes the $t$ derivative of $\u$. The proof uses only that (i) $q(x,t')$ depends on the path of $J$ only up to time $t'$ and (ii) $J$ is a Markov jump process with generator ${\RR}^{\top}$.

%%%%%%%%%%%%%%%%%%%%%%%%%%%%%%%%%
\begin{lemma}\label{2pt}
We have that
\begin{align}\label{firstpart}
\u(x,t',t)
=e^{{\RR}(t-t')}\u(x,t',t'),\quad x\in\R^{d},\,0<t'\le t,
\end{align}
and therefore
\begin{align}\label{secondpart}
\frac{\partial}{\partial t}\u(x,t',t)
={\RR}\u(x,t',t),\quad x\in\R^{d},\,0<t'\le t.
\end{align}
\end{lemma}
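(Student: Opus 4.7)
My plan is to establish \eqref{firstpart} by a conditioning argument on the path of $J$ up to time $t'$, and then obtain \eqref{secondpart} by a one-line differentiation.

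First, I would observe that $q(x,t')$ is determined by $\{J(s)\}_{s\le t'}$. Once a path of $J$ is fixed, $q(\cdot,t')$ is the solution at time $t'$ of the fractional Fokker-Planck equation \eqref{ffpesw} whose coefficients on $[0,t']$ are measurable with respect to $\mathcal{F}^J_{t'}:=\sigma(J(s):0\le s\le t')$; the subordinator and Brownian randomness have already been integrated out in producing $q$. Consequently $q(x,t')$ is $\mathcal{F}^J_{t'}$-measurable, and the tower property yields
\begin{align*}
\u(x,t',t)
=\E[q(x,t')\e_{J(t)}]
=\E\bigl[q(x,t')\,\E[\e_{J(t)}\mid\mathcal{F}^J_{t'}]\bigr].
\end{align*}

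Next, I would evaluate the inner conditional expectation using the Markov property of $J$. Since $\RR^\top$ is the infinitesimal generator of $J$, the column vector of transition probabilities satisfies the forward equation driven by $\RR$; explicitly, $\P(J(t)=i\mid J(t')=j)=(e^{\RR(t-t')})_{i,j}$. Summing $(e^{\RR(t-t')})_{i,j}\e_i$ over $i$ gives the $j$th column of the matrix exponential, so
\begin{align*}
\E[\e_{J(t)}\mid\mathcal{F}^J_{t'}]
=\E[\e_{J(t)}\mid J(t')]
=e^{\RR(t-t')}\e_{J(t')}.
\end{align*}
Pulling the deterministic matrix $e^{\RR(t-t')}$ outside the outer expectation then gives
\begin{align*}
\u(x,t',t)
=e^{\RR(t-t')}\E[q(x,t')\e_{J(t')}]
=e^{\RR(t-t')}\u(x,t',t'),
\end{align*}
which is \eqref{firstpart}. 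Differentiating in $t$ and recognizing $e^{\RR(t-t')}\u(x,t',t')=\u(x,t',t)$ on the right gives $\frac{\partial}{\partial t}\u(x,t',t)=\RR\,\u(x,t',t)$, which is \eqref{secondpart}.

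The only potentially delicate step is rigorously justifying that $q(x,t')$ is $\mathcal{F}^J_{t'}$-measurable, but this is built into the paper's definition of $q$ as the density obtained by averaging over the subdiffusive randomness for a given realization of $J$. Everything else is a short conditioning computation together with the two ingredients the lemma itself flags, so I do not anticipate a serious obstacle.
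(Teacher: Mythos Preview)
Your proposal is correct and follows essentially the same approach as the paper: condition on the filtration generated by $J$ up to time $t'$, use the $\mathcal{F}^J_{t'}$-measurability of $q(x,t')$ to pull it out of the inner conditional expectation, evaluate $\E[\e_{J(t)}\mid\mathcal{F}^J_{t'}]=e^{\RR(t-t')}\e_{J(t')}$ via the Markov property, and then differentiate in $t$. Your write-up is slightly more explicit about the transition-probability computation, but the argument is identical in substance.
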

%%%%%%%%%%%%%%%%%%%%%%%%%%%%%%%%%

%%%%%%%%%%%%%%%%%%%%%%%%%%%%%%%%%
\begin{proof}[Proof of Lemma~\ref{2pt}]
Let $\mathcal{F}_{t'}$ denote the filtration generated by $J(t')$. Hence,
\begin{align}\label{e0}
\E[q(x,t')\e_{J(t)}]
=\E[\E[q(x,t')\e_{J(t)}\,|\,\mathcal{F}_{t'}]]
=\E[q(x,t')\E[\e_{J(t)}\,|\,\mathcal{F}_{t'}]],\quad 0<t'\le t,
\end{align}
where the first equality is the tower property of conditional expectation (see Theorem 4.1.13 in \cite{durrett2019}) and the second equality uses that $q(x,t')$ depends on $J$ up to time $t'$ (and uses Theorem 4.1.14 in \cite{durrett2019}). Since $J$ is a Markov jump process with generator ${\RR}^{\top}$, the following almost sure equality is immediate \cite{norris1998},
\begin{align}\label{e1}
\E[\e_{J(t)}\,|\,\mathcal{F}_{t'}]
=e^{{\RR}(t-t')}\e_{J(t')}.
\end{align}
Combining \eqref{e0} and \eqref{e1} yields \eqref{firstpart}. Differentiating \eqref{firstpart} with respect to $t$ yields \eqref{secondpart} to complete the proof.
\end{proof}
%%%%%%%%%%%%%%%%%%%%%%%%%%%%%%%%%

In light of \eqref{chainrule0} and \eqref{secondpart}, it remains only to compute the $t'$ derivative of $\u$. Assuming that (i) $q$ satisfies \eqref{ffpesw} away from jump times of $J$ and (ii) $q$ is sufficiently regular to interchange expectation with the time derivative, space derivatives, and the fractional time derivative, then we have that for $0<t'\le t$,
\begin{align}\label{final}
\begin{split}
\frac{\partial}{\partial t'}\u(x,t',t)
=\frac{\partial}{\partial t'}\E[q(x,t')\e_{J(t)}]
&=\E\Big[\frac{\partial}{\partial t'}q(x,t')\e_{J(t)}\Big]\\
&=\E[\L_{J(t)}\DD_{t'}q(x,t')\e_{J(t)}]\\
&=\E[\textup{diag}(\L_{0},\dots,\L_{n-1})\DD_{t'}q(x,t')\e_{J(t)}]\\
&=\textup{diag}(\L_{0},\dots,\L_{n-1})\DD_{t'}\E[q(x,t')\e_{J(t)}]\\
&=\textup{diag}(\L_{0},\dots,\L_{n-1})\DD_{t'}\u(x,t',t),
\end{split}
\end{align}
where $\DD_{t'}$ denotes that the fractional operator is acting on $t'$. In \eqref{final}, we used the following identity,
\begin{align*}
\L_{J(t)}\DD_{t'}q(x,t')\e_{J(t)}
=\textup{diag}(\L_{0},\dots,\L_{n-1})\DD_{t'}q(x,t')\e_{J(t)}.
\end{align*}
Combining \eqref{final} with \eqref{firstpart} in Lemma~\ref{2pt} implies that for $0<t'\le t$,
\begin{align}\label{ff}
\begin{split}
\frac{\partial}{\partial t'}\u(x,t',t)
&=\textup{diag}(\L_{0},\dots,\L_{n-1})\DD_{t'}\big(e^{{\RR}(t-t')}\u(x,t',t')\big)\\
&=\textup{diag}(\L_{0},\dots,\L_{n-1})e^{{\RR}t}\DD_{t'}\big(e^{-{\RR}t'}\u(x,t',t')\big).
\end{split}
\end{align}

Finally, combining \eqref{chainrule0} with \eqref{secondpart} in Lemma~\ref{2pt} and \eqref{ff} yields the following reaction-subdiffusion equations,
\begin{align}\label{mrf}
\frac{\partial}{\partial t}\q
=\textup{diag}(\L_{0},\dots,\L_{n-1})e^{{\RR}t}\DD (e^{-{\RR}t}\q)+{\RR}\q,\quad x\in\R^{d},\,t>0.
\end{align}
In the special case that the memory kernel is $M(t)=(\Gamma(\alpha)t^{1-\alpha})^{-1}$ for $\alpha\in(0,1)$, the fractional operator is the Riemann-Liouville operator, $\DD=\D$ in \eqref{rl}. If we further take the forward Fokker-Planck operators to be
\begin{align*}
\L_{j}=K_{j}\Delta\quad\text{for }j\in\{0,\dots,n-1\},
\end{align*}
corresponding to spatially constant diffusivity and zero drift, then \eqref{mrf} becomes
\begin{align}\label{mrfs}
\frac{\partial}{\partial t}\q
=\textup{diag}(K_{0},K_{1},\dots,K_{n-1})\Delta e^{{\RR}t}\D(e^{-{\RR}t} \q)+{\RR}\q,\quad x\in\R^{d},\,t>0.
\end{align}
Equation~\eqref{mrfs} answers the question posed in the Introduction section as to the analog of the classical reaction-diffusion equations in \eqref{rd} for the case of subdiffusion with first-order reactions. 

To derive \eqref{mrf}, we assumed in \eqref{final} that we could interchange expectation $\E$ with $\frac{\partial}{\partial t'}$, $\L_{i}$, and $\DD_{t'}$. The following theorem merely gives sufficient conditions to ensure the validity of these manipulations in \eqref{final}. Given the derivation above, the proof follows from standard results on interchanging expectation with differentiation (see, for example, Theorem A.5.3 in \cite{durrett2019}) and the theorems of Fubini and Tonelli.

\begin{theorem}\label{main}
Assume that for each realization of $J=\{J(t)\}_{t\ge0}$, the function $q$ satisfies \eqref{ffpesw} at all times at which $J$ is continuous. Assume that $\DD q(x,t)$ is continuous in $t$ and twice continuously differentiable in $x$, and assume there exists a deterministic function $C:\R^{d}\times(0,\infty)\to\R$ that is bounded on compact subsets such that if $x\in \R^{d}$, $t>0$, $k,l\in\{1,\dots,d\}$, and $\beta$ is a multi-index with $|\beta|\le2$, then 
\begin{align}\label{abound}
\begin{split}
|q(x,t)|
+\Big|\frac{\partial^{|\beta|}}{\partial x_{k}^{\beta_{1}}\partial x_{l}^{\beta_{2}}}\DD q(x,t)\Big|
\le C(x,t)\quad\text{with probability one}.
\end{split}
\end{align}
Assume that for each $j\in\{0,\dots,n-1\}$, the drift and diffusivity in \eqref{coeff} are twice continuously differentiable in $x$ with bounded derivatives of order $\le2$.

The reaction-subdiffusion equations in \eqref{mrf} hold. 
\end{theorem}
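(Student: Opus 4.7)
The plan is to justify, in turn, the five equalities comprising \eqref{final}; once these are in place, combining them with Lemma~\ref{2pt} and the chain rule \eqref{chainrule0} yields \eqref{mrf} exactly as in the derivation preceding the theorem statement. Each equality is an instance of either differentiation under the expectation (Theorem~A.5.3 of \cite{durrett2019}) or Fubini--Tonelli, and the hypotheses of the theorem are tailored to provide a deterministic, locally bounded dominating function at every step.

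First, to move $\partial/\partial t'$ inside $\E$, I would apply differentiation under the expectation. The required dominating function arises by substituting \eqref{ffpesw} and expanding $\L_{J(t')}\DD q$ via the product rule: the result is a sum of products of spatial derivatives of $b$ and $\sigma\sigma^{\top}$ (of order $\le 2$) with spatial derivatives of $\DD q$ (of order $\le 2$). By the assumed boundedness of the coefficients' derivatives and the hypothesis $|\partial^{\beta}\DD q(x,t)|\le C(x,t)$ for $|\beta|\le 2$, this combination is dominated by a deterministic function that is bounded on compact subsets. The substitution \eqref{ffpesw} for $\partial q/\partial t'$ itself is valid off the locally finite (hence Lebesgue-null) set of jump times of $J$, which is sufficient.

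For the interchange with the diagonal spatial operator, the algebraic identity $\L_{J(t)}f\,\e_{J(t)} = \textup{diag}(\L_{0},\dots,\L_{n-1})(f\e_{J(t)})$ is immediate from the definition of $\e_{J(t)}$, and the subsequent interchange of $\E$ with $\textup{diag}(\L_{0},\dots,\L_{n-1})$ is another application of differentiation under $\E$ with the same deterministic bound.

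The main obstacle is the last interchange, moving $\DD_{t'}$ outside $\E$. Unwinding \eqref{DD}, this splits into two steps. First, Fubini moves $\E$ inside the convolution $\int_{0}^{t'}M(t'-s)q(x,s)\,\dd s$, which requires only local integrability of the memory kernel $M$ together with the pointwise bound $|q|\le C(x,\cdot)$ from \eqref{abound}. Second, differentiation under $\E$ pulls the outer $\partial/\partial t'$ past $\E$; the uniform-in-$h$ dominating function for the difference quotients of $\int_{0}^{t'}M(t'-s)q(x,s)\,\dd s$ is furnished by the pointwise bound $|\DD q|\le C(x,t)$ of \eqref{abound} together with its assumed continuity in $t$. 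With these in hand, all five equalities of \eqref{final} hold, and the theorem follows without any ingredients beyond dominated convergence and Fubini--Tonelli.
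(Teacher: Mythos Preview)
Your proposal is correct and follows essentially the same approach as the paper. The paper's own proof is merely the single sentence that ``the proof follows from standard results on interchanging expectation with differentiation (see, for example, Theorem~A.5.3 in \cite{durrett2019}) and the theorems of Fubini and Tonelli''; your proposal is a faithful expansion of that sentence, correctly matching each interchange in \eqref{final} to the appropriate tool and explaining how the hypotheses supply the required dominating functions.
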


%%%%%%%%%%%%%%%%%%%%%%%%%%%%%%%%%%%%%%%%%%%%%%%%%%%%%%%%%%%%%%%%%%%%%%%%%%%%%%%%%%%%%%%%%%%%%%%%%%%%%%%%%%%%%%%%%%%%%%%%%%%%%%%%%%%%%%%%%%%%%%%%%%%%%%%%%%%%%%%%%%%%%%%%%%%%%%%%%%%%%%%%%%%%%%%%%%%%%%%%%%%%%%%%%%%%%%%%%%%%%%%%%%%%%%%%%%%%%%%%%%%%%%%%%%%%%%%%%%%%%%%%%%%%%%%%%%%%%%%%%%%%%%%%%%%%%%%%%%%%%%%%%%%%%%%%%%%%%%%%%%%%%%%%%%%%%%%%%%%%%%%%%%%%%%%%%%%%%%%%%%%%%%%%%%%%%%%%%%%%%%%%%%%%%%%%%%%%%%%%%%%%%%%%%%%%%%%%%%%%%%%%%%%%%%%%%%%%%%%%%%%%%%%%
\section{Stochastic representation}\label{rep}

In this section, we construct and analyze the randomly switching subdiffusive process $Y=\{Y(t)\}_{t\ge0}$ corresponding to the fractional reaction-subdiffusion equations in \eqref{mrf} in section~\ref{equations}. In particular, we want to construct and study a subdiffusive process whose drift and diffusivity at time $t\ge0$ depend on the state of a Markov jump process. This problem can be cast into the framework in section~\ref{prelim} above, which considered a subdiffusive process with time-dependent drift and diffusivity. The main difference in this section is that the time-dependence of the drift and diffusivity is controlled by the Markov jump process.

%%%%%%%%%%%%%%%%%%%%%%%%%%%%%%%%%%%%%%%%%%%%%%%%%%%%%%%%%%%%%%%%%%%%%%%%%%%%%%%%%%%%%%%%%%%%%%%%%%%%%%%%%%%%%%%%%%%%%%%%%%%%%%%%%%%%%%%%%%%%
\subsection{Probabilistic construction}

Let $J=\{J(t)\}_{t\ge0}$ be a continuous-time Markov jump process on the finite state space $\{0,\dots,n-1\}$ with infinitesimal generator ${\RR}^{\top}$ as in section~\ref{equations}. Let $T=\{T(s)\}_{s\ge0}$ be a driftless L\'{e}vy subordinator with inverse $S=\{S(t)\}_{t\ge0}$ as in section~\ref{prelim}. Define the jump process $I=\{I(s)\}_{s\ge0}$ as a random time change of $\{J(t)\}_{t\ge0}$,
\begin{align}\label{c1}
I(s):=J({{{{T}}}}(s)),\quad s\ge0.
\end{align}
We prove below that $I$ is in fact a Markov jump process with a different generator than $J$. 
Suppose $X=\{X(s)\}_{s\ge0}$ satisfies the following stochastic differential equation that switches according to $I$,
\begin{align}\label{c2}
\dd {{X}}(s)
=b({{X(s)}},I(s))\,\dd s+{{{\sigma}}}({{X(s)}},I(s))\,\dd W(s)
,\quad s\ge0,
\end{align}
where $W=\{W(s)\}_{s\ge0}\in\R^{m}$ is a standard Brownian motion independent of ${{{{T}}}}$ and $J$ (and therefore $I$). The coefficients in \eqref{c2} are as in \eqref{coeff}, and we assume that they are bounded by a linear function in $x$ and are Lipschitz continuous in $x$ to ensure that there exists a unique solution $\{X(s)\}_{s\ge0}$ to \eqref{c2} for almost every realization of $\{I(s)\}_{s\ge0}$ \cite{maobook}. We then define $Y$ as a random time change of $X$,
\begin{align}\label{c3}
Y(t)
:=X(S(t)),\quad t\ge0.
\end{align}

We now make some comments about the construction of $Y$ in \eqref{c3}. First, to compare to the construction in section~\ref{prelim} above, define
\begin{align}\label{bar}
\overline{b}(x,t):=b(x,J(t)),\quad
{\overline{{{\sigma}}}}(x,t):={{\sigma}}(x,J(t)).
\end{align}
Then, upon noting the definition of $I(s)$ in \eqref{c1}, the stochastic differential equation in \eqref{c2} is identical to \eqref{sde}, and therefore $Y$ in \eqref{c3} is just as in \eqref{timechange}.

Second, we describe how $Y$ in \eqref{c3} connects to the stochastic fractional Fokker-Planck equation in \eqref{ffpesw}. Fix a realization of the jump process $J$. We cannot apply Theorem~2.1 in \cite{carnaffan2017} to conclude that the density of $Y$ in \eqref{c3} satisfies \eqref{ffpesw} because the coefficients in \eqref{bar} will in general be discontinuous in time (since $J$ is a jump process). However, for this fixed realization of $J$, we can define coefficients $\overline{b}_{\eps}(x,t)$ and $\overline{{{\sigma}}}_{\eps}(x,t)$ which are smooth in time and converge pointwise as $\eps\to0$ to the coefficients in \eqref{bar} for each $x\in\R^{d}$ and every $t\ge0$. We then define $X_{\eps}=\{X_{\eps}(s)\}_{s\ge0}$ as in \eqref{c2}, but with the smooth coefficients $\overline{b}_{\eps}$ and $\overline{{{\sigma}}}_{\eps}$, and we further define $Y_{\eps}(t)=X_{\eps}(S(t))$ analogously to \eqref{c3}. Then, assuming $b_{\eps}$ and $\sigma_{\eps}$ are Lipschitz in space, bounded, and $\sigma_{\eps}\sigma_{\eps}^{\top}$ is positive definite, Theorem~2.1 in \cite{carnaffan2017} implies that if the density $q_{\eps}(x,t)$ of $Y_{\eps}(t)$ exists, then it satisfies the following fractional Fokker-Planck equation,
\begin{align*}
\frac{\partial}{\partial t}q_{\eps}
=\L_{\eps}\DD q_{\eps},\quad x\in\R^{d},\,t>0,
\end{align*}
where $\L_{\eps}$ is in \eqref{lop} but with coefficients $\overline{b}_{\eps}$ and $\overline{{{\sigma}}}_{\eps}$. Taking $\eps\to0$, the coefficients in the differential operator $\L_{\eps}$ converge pointwise to the coefficients in $\L_{J(t)}$ in \eqref{ffpesw} and $Y_{\eps}(t)$ converges almost surely to $Y(t)$ in \eqref{c3}.

We therefore conclude, on at least a formal level, that the stochastic fractional Fokker-Planck equation in \eqref{ffpesw} describes the process $Y$ in \eqref{c3} given a realization of $J$. Furthermore, in light of section~\ref{equations}, the reaction-subdiffusion equations in \eqref{mrf} describe the probability density of the two-component process $(Y,J)$. 

%%%%%%%%%%%%%%%%%%%%%%%%%%%%%%%%%%%%%%%%%%%%%%%%%%%%%%%%%%%%%%%%%%%%%%%%%%%%%%%%%%%%%%%%%%%%%%%%%%%%%%%%%%%%%%%%%%%%%%%%%%%%%%%%%%%%%%%%%%%%
\subsection{Analysis of internal process $(X,I)$}

In our construction above, $Y$ is a random time change of $X$ and $I$ is a random time change of $J$. Though we are ultimately interested in the process $(Y,J)$, it can be useful to study the internal process $(X,I)$ in order to understand $(Y,J)$. 

\begin{theorem}\label{markov}
Let $J=\{J(t)\}_{t\ge0}$ be a time-homogeneous, continuous-time Markov jump process on $\{0,\dots,n-1\}$ with infinitesimal generator ${\RR}^{\top}\in\R^{n\times n}$. Let $T=\{T(s)\}_{s\ge0}$ be any independent L{\'e}vy subordinator (not necessarily driftless). Then the process $I=\{I(s)\}_{s\ge0}=\{J(T(s))\}_{s\ge0}$ is a time-homogeneous, continuous-time Markov jump process on $\{0,\dots,n-1\}$. 

Furthermore, if $\rho\in\R^{n}$ is an invariant distribution of $J$, then $\rho$ is an invariant distribution of $I$. If ${\RR}$ is diagonalizable with all real eigenvalues, which means ${\RR}=-P\Lambda P^{-1}$ where $\Lambda$ is a real diagonal matrix, then the generator of $I$ is $(\widetilde{{\RR}})^{\top}$, where
\begin{align*}
\widetilde{{\RR}}
:=-P\Phi(\Lambda)P^{-1},
\end{align*}
where $\Phi(\Lambda)$ is obtained by applying the Laplace exponent $\Phi$ of $T$ entrywise to $\Lambda$.
\end{theorem}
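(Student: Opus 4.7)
The plan is to identify the transition semigroup of $I$ directly as $\psi(s):=\E[e^{RT(s)}]$, verify the Markov property via a careful filtration argument, and then use functional calculus on the spectral decomposition of $R$ to read off the generator. Throughout I use the convention (consistent with \eqref{e1}) that $e^{Rt}$ acts on column basis vectors, so that $e^{Rt}\e_{j}$ is the column of transition probabilities from state $j$ under $J$; the ``standard'' (row-acting) generator of $J$ is then $R^{\top}$, matching the statement.

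First I would show that $\psi(s):=\E[e^{RT(s)}]\in\R^{n\times n}$ is a Markov semigroup. The stationary and independent increments of $T$, together with independence of $T$ from $J$, give for $s,u\ge0$
\begin{equation*}
\psi(s+u)=\E\bigl[e^{R(T(s+u)-T(u))}\,e^{RT(u)}\bigr]=\E\bigl[e^{R(T(s+u)-T(u))}\bigr]\E\bigl[e^{RT(u)}\bigr]=\psi(s)\psi(u),
\end{equation*}
since $T(s+u)-T(u)\stackrel{d}{=}T(s)$ is independent of $T(u)$. Clearly $\psi(0)=I$, and each realization $e^{RT(s)}$ is a stochastic matrix (nonnegative entries, column sums equal to $1$), so $\psi(s)$ is as well.

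Second, for the Markov property of $I$, I would enlarge the natural filtration of $I$ to
\begin{equation*}
\mathcal{F}_{s}:=\sigma\bigl(T(u):u\le s\bigr)\vee\sigma\bigl(J(r):r\le T(s)\bigr),
\end{equation*}
and write $T(s+h)=T(s)+\Delta_{h}$ with $\Delta_{h}:=T(s+h)-T(s)$. Independent increments of $T$ and independence of $T$ from $J$ give $\Delta_{h}\perp\mathcal{F}_{s}$, with $\Delta_{h}\stackrel{d}{=}T(h)$. Conditioning first on $\mathcal{F}_{s}$ and $\Delta_{h}$ and using the Markov property of $J$ at the $\mathcal{F}_{s}$-measurable time $T(s)$,
\begin{equation*}
\P\bigl(I(s+h)=i\mid\mathcal{F}_{s},\Delta_{h}\bigr)=\bigl(e^{R\Delta_{h}}\bigr)_{i,\,I(s)},
\end{equation*}
and averaging over $\Delta_{h}$ yields $\P(I(s+h)=i\mid\mathcal{F}_{s})=(\psi(h))_{i,\,I(s)}$, which depends on $\mathcal{F}_{s}$ only through $I(s)$. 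Hence $I$ is time-homogeneous Markov with semigroup $\psi$, and being c\`adl\`ag on a finite state space, it is a jump process.

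Third, the invariance claim is immediate: if $\rho\in\R^{n}$ is invariant for $J$ so that $e^{Rt}\rho=\rho$ for all $t\ge0$, then $\psi(s)\rho=\E[e^{RT(s)}\rho]=\E[\rho]=\rho$. Finally, under the assumption $R=-P\Lambda P^{-1}$ with $\Lambda$ real diagonal, the diagonal entries $\lambda_{i}$ of $\Lambda$ are nonnegative (the eigenvalues of the generator $R$ being nonpositive), so the Laplace-exponent identity \eqref{le} applies entrywise and
\begin{equation*}
\psi(s)=P\,\E\bigl[e^{-\Lambda T(s)}\bigr]P^{-1}=P\,\textup{diag}\bigl(e^{-s\Phi(\lambda_{i})}\bigr)P^{-1}=P\,e^{-s\Phi(\Lambda)}P^{-1}=e^{s\tilde{R}},
\end{equation*}
with $\tilde{R}=-P\Phi(\Lambda)P^{-1}$. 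Differentiating at $s=0$ identifies the column-acting generator of $I$ as $\tilde{R}$, and hence the standard generator as $\tilde{R}^{\top}$. I expect the main obstacle to be the second step, where one must choose a filtration rich enough that $\Delta_{h}$ is independent of it while still containing the natural filtration of $I$, and invoke the Markov property of $J$ at the random (but $\mathcal{F}_{s}$-measurable) time $T(s)$; the remaining steps reduce to the identity $e^{PAP^{-1}}=Pe^{A}P^{-1}$ and the defining identity of the Laplace exponent.
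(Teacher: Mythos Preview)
Your proposal is correct and follows essentially the same route as the paper: compute the transition semigroup $\psi(s)=\E[e^{RT(s)}]$, verify it is a stochastic-matrix semigroup via the independent stationary increments of $T$, check the Markov property of $I$ by combining independence of $T$'s increments with the Markov property of $J$, transfer invariance of $\rho$ by pushing it through $\psi(s)$, and finally use the spectral decomposition together with \eqref{le} to identify $\widetilde R$.

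Two presentational differences are worth noting. First, the paper sidesteps the measurability subtlety you flagged in your enlarged filtration $\mathcal{F}_{s}=\sigma(T(u):u\le s)\vee\sigma(J(r):r\le T(s))$ by instead verifying the Markov property in the finite-dimensional form: it conditions directly on $I(s_{0})=i_{0},\dots,I(s_{N})=i_{N}$, then freezes $T$ (using independence of $J$ and $T$) so that the $T(s_{k})$ become deterministic ordered times at which the ordinary Markov property of $J$ applies, and finally averages over $T$ using that $T(s_{N+1})-T(s_{N})$ is independent of $(T(s_{0}),\dots,T(s_{N}))$. This avoids having to make precise sense of $\sigma(J(r):r\le T(s))$ as a sub-$\sigma$-algebra. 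Second, the paper explicitly checks that $\psi$ is uniformly continuous at $0$ (via right-continuity of $T$ and dominated convergence) before concluding that a generator $\widetilde R$ exists in the general, not-necessarily-diagonalizable case; your argument implicitly relies on this when you say ``being c\`adl\`ag on a finite state space, it is a jump process,'' so you should either add the one-line continuity check or note that right-continuity of paths forces $\psi(h)\to I$.
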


\begin{proof}[Proof of Theorem~\ref{markov}]
For an arbitrary $N\ge0$, let $0\le t_{0}\le t_{1}\le \cdots\le t_{N+1}$ be an arbitrary sequence of times, and let $i_{0},i_{1},\dots,i_{N+1}$ be an arbitrary sequence of states in $\{0,\dots,n-1\}$. Theorem~2.8.2 in \cite{norris1998} implies that
\begin{align*}
\P(J(t_{N+1})=i_{N+1}\,|\,J(t_{0})=i_{0},\dots,J(t_{N})=i_{N})
&=\P(J(t_{N+1})=i_{N+1}\,|\,J(t_{N})=i_{N})\\
&=\big(e^{{\RR}^{\top}(t_{N+1}-t_{N})}\big)_{i_{N},i_{N+1}},
\end{align*}
where $(e^{{\RR}^{\top}(t_{N+1}-t_{N})})_{i_{N},i_{N+1}}$ denotes the entry in the $(i_{N})$th row and $(i_{N+1})$st column of the matrix exponential $e^{{\RR}^{\top}(t_{N+1}-t_{N})}$. Since $I(s):=J(T(s))$ for $s\ge0$, $J$ and $T=\{T(s)\}_{s\ge0}$ are independent, and $T$ is almost surely nondecreasing, it follows that
\begin{align}\label{heart}
\begin{split}
&\P(I(s_{N+1})=i_{N+1}\,|\,I(s_{0})=i_{0},\dots,I(s_{N})=i_{N})\\
&\quad=\P(J(T(s_{N+1}))=i_{N+1}\,|\,J(T(s_{0}))=i_{0},\dots,J(T(s_{N}))=i_{N})\\
&\quad=\P(J(T(s_{N+1}))=i_{N+1}\,|\,J(T(s_{N}))=i_{N})\\
&\quad=\P(I(s_{N+1})=i_{N+1}\,|\,I(s_{N})=i_{N}),\\
&\quad=\big(\E[e^{{\RR}^{\top}(T(s_{N+1})-T(s_{N}))}]\big)_{i_{N},i_{N+1}}.
\end{split}
\end{align}
If $F(s):=\E[e^{{\RR}^{\top}T(s)}]$, then $F(0)$ is the identity matrix $I_{n}\in\R^{n\times n}$. Furthermore, the almost sure right-continuity of $T$ and the Lebesgue dominated convergence theorem ensure that $\lim_{s\to0+}\|F(s)-I_{n}\|=0$. In addition, since $T$ has independent and identically distributed increments, we have that
\begin{align*}
F(s)F(s')=
\E[e^{{\RR}^{\top}T(s)}]\E[e^{{\RR}^{\top}T(s')}]
&=\E[e^{{\RR}^{\top}(T(s+s')-T(s'))}]\E[e^{{\RR}^{\top}T(s')}]\\
&=\E[e^{{\RR}^{\top}T(s+s')}]
=F(s+s')
,\quad\text{for any }s,s'\ge0.
\end{align*}
Therefore, $F$ is a uniformly continuous semigroup on the finite-dimensional space $\R^{n}$, and thus there exists a matrix $\widetilde{{\RR}}\in\R^{n\times n}$ such that
\begin{align}\label{nn}
F(s)
=\E[e^{{\RR}^{\top}T(s)}]
=e^{\widetilde{{\RR}}^{\top}s},\quad s\ge0.
\end{align}

Now, \eqref{heart} ensures that every row of $F(s)$ is a distribution on $\R^{n}$, and therefore Theorem~2.1.2 in \cite{norris1998} implies that $\widetilde{{\RR}}^{\top}$ has nonnegative off-diagonal entries and zero row sums. Therefore, Theorem~2.8.2 in \cite{norris1998} implies that $I$ is a time-homogeneous, continuous-time Markov jump process on $\{0,\dots,n-1\}$ with infinitesimal generator $\widetilde{{\RR}}^{\top}\in\R^{n\times n}$.

Suppose $\rho\in\R^{n}$ is an invariant distribution of $J$, which means that if $\P(J(0)=i)=\rho_{i}$ for all $i\in\{0,\dots,n-1\}$, then $\P(J(t)=i)=\rho_{i}$ for all $t\ge0$. Since $J$ and $T$ are independent, it follows immediately that if $\P(I(0)=i)=\P(J(0)=i)=\rho_{i}$ for all $i\in\{0,\dots,n-1\}$, then $\P(I(s)=i)=\P(J(T(s))=i)=\rho_{i}$ for all $s\ge0$. Hence, $\rho$ is an invariant distribution of $I$. 

Finally, suppose ${\RR}=-P\Lambda P^{-1}$ where $\Lambda$ is a real diagonal matrix, and thus the entries of $\Lambda$ are nonnegative \cite{norris1998}. Then, \eqref{nn} implies that
\begin{align*}
e^{\widetilde{{\RR}}s}
=\E[e^{{\RR}T(s)}]
=P\E[e^{-\Lambda t}]P^{-1}
=Pe^{-s\Phi(\Lambda)}P^{-1},\quad s\ge0,
\end{align*}
since $T$ has Laplace exponent $\Phi$. Therefore, $\widetilde{{\RR}}=-P\Phi(\Lambda)P^{-1}$. 
\end{proof}

Since Theorem~\ref{markov} ensures that $I$ is Markovian, $X$ satisfies a so-called ``stochastic differential equation with Markovian switching,'' which is a well-studied process (see, for example, the book by Mao and Yuan \cite{maobook}). An interesting implication of the analysis above is that the network jump structure of $I$ can be quite different from $J$. That is, $J$ may not be able to jump directly from some state $i$ to some other state $j$ (i.e.\ $R_{j,i}=0$), but $I$ might (i.e.\ $\widetilde{R}_{j,i}>0$). To illustrate, suppose that $J$ is irreducible, which means that $J$ may eventually reach any state $j$ starting from any other state $i$ (though it may not be able to jump directly from $i$ to $j$). Then, it is necessarily the case that $I$ may jump directly from $i$ to $j$ (i.e.\ $\widetilde{R}_{j,i}>0$), as long as the L{\'e}vy subordinator $T$ has nonzero L{\'e}vy measure ($\nu$ in \eqref{le}).

To see this, note that the irreducibility of $J$ means that with strictly positive probability, $J(t')=i$ and $J(t)=j$ for $0<t'<t$ for any $i,j\in\{0,\dots,n-1\}$. Now, since $I(s):=J(T(s))$, it follows that $I$ may jump directly from $i$ to $j$ since it may ``skip'' the states visited by $J$ between states $i$ and $j$ because $T(s)$ is discontinuous in $s$. We illustrate this in some examples in sections~\ref{realization} and \ref{skip}.

%%%%%%%%%%%%%%%%%%%%%%%%%%%%%%%%%%%%%%%%%%%%%%%%%%%%%%%%%%%%%%%%%%%%%%%%%%%%%%%%%%%%%%%%%%%%%%%%%%%%%%%%%%%%%%%%%%%%%%%%%%%%%
\subsection{Inverse subordinator evaluated at an exponential time}

Another implication of Theorem~\ref{markov} is a general result that states that if we evaluate an inverse L{\'e}vy subordinator at an independent, exponentially distributed time with rate $\lambda>0$, then we obtain an exponentially distributed random variable with rate $\Phi(\lambda)$, where $\Phi$ is the Laplace exponent of the L{\'e}vy subordinator. This generalizes Lemma~1 in \cite{lawley2020sr2}. The following corollary states this result precisely.

\begin{corollary}\label{nc}
Let $T=\{T(s)\}_{s\ge0}$ be any L{\'e}vy subordinator (not necessarily driftless) with inverse $S=\{S(t)\}_{t\ge0}$ as in \eqref{S}. If $\tau$ is an independent exponential random variable with rate $\lambda>0$, then 
\begin{align*}
\P(S(\tau)\le t)
=1-e^{-\Phi(\lambda)t},\quad\text{for all }t\ge0,
\end{align*}
where $\Phi(\lambda)$ denotes the Laplace exponent of $T$. That is, $S(\tau)$ is exponentially distributed with rate $\Phi(\lambda)>0$ as long as $\Phi(\lambda)>0$ (the case $\Phi(\lambda)=0$ is the trivial case that $T(s)=0$ and $S(t)=\infty$ for all $s,t>0$). 
\end{corollary}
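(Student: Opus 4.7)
The plan is to derive the corollary by applying Theorem~\ref{markov} to a cleverly chosen two-state Markov jump process. Let $J=\{J(t)\}_{t\ge0}$ be the time-homogeneous Markov chain on $\{0,1\}$ starting at $J(0)=0$, with state $1$ absorbing and transition rate $\lambda$ from $0$ to $1$. Its generator is ${\RR}^{\top}$ with
\begin{align*}
{\RR}=\begin{pmatrix} -\lambda & 0 \\ \lambda & 0 \end{pmatrix},
\end{align*}
and by construction the first passage time $\tau:=\inf\{t>0:J(t)=1\}$ is exponentially distributed with rate $\lambda$, which we take to be the exponential random variable appearing in the statement of the corollary.

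Next, I would diagonalize ${\RR}=-P\Lambda P^{-1}$ with $\Lambda=\textup{diag}(0,\lambda)$ (real, nonnegative eigenvalues, as required in Theorem~\ref{markov}) and invoke that theorem to conclude that $I=\{J(T(s))\}_{s\ge0}$ is a time-homogeneous Markov jump process on $\{0,1\}$ with generator $\widetilde{{\RR}}^{\top}$, where $\widetilde{{\RR}}=-P\,\textup{diag}(\Phi(0),\Phi(\lambda))\,P^{-1}$. Since $\Phi(0)=0$ from \eqref{le}, a short matrix computation shows that $\widetilde{{\RR}}$ has the same sparsity pattern as ${\RR}$ with $\lambda$ replaced by $\Phi(\lambda)$. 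Hence $I(0)=0$, state $1$ is absorbing for $I$, and $I$ jumps from $0$ to $1$ at rate $\Phi(\lambda)$. Consequently, $\tau_{I}:=\inf\{s>0:I(s)=1\}$ is exponentially distributed with rate $\Phi(\lambda)$ whenever $\Phi(\lambda)>0$; if $\Phi(\lambda)=0$, then necessarily $T\equiv 0$ and $S(\tau)=\infty$, matching the stated formula.

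Finally, I would identify $\tau_{I}$ with $S(\tau)$ almost surely. Because $J$ is right-continuous with $J(u)=1$ iff $u\ge\tau$, we have $I(s)=J(T(s))=1$ iff $T(s)\ge\tau$, so $\tau_{I}=\inf\{s>0:T(s)\ge\tau\}$, whereas $S(\tau)=\inf\{s>0:T(s)>\tau\}$ by \eqref{S}. These two infima agree almost surely: if $T$ has positive drift, then $T$ crosses $\tau$ at a unique instant $s^{*}$ and strictly exceeds $\tau$ immediately after, giving $\tau_{I}=S(\tau)=s^{*}$; and regardless of the drift, the event that a jump of $T$ lands exactly on the independent, continuously distributed value $\tau$ has probability zero. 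The main obstacle is precisely this final measure-theoretic identification, reconciling the strict inequality in the definition \eqref{S} of $S$ with the non-strict inequality that naturally arises from the absorption event $\{J(T(s))=1\}$. Once this is done, $S(\tau)$ inherits the exponential distribution of $\tau_{I}$, completing the proof.
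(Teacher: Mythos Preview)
Your proposal is correct and follows essentially the same route as the paper: construct the two-state absorbing chain $J$, apply Theorem~\ref{markov} to read off that $I=J(T(\cdot))$ jumps from $0$ to $1$ at rate $\Phi(\lambda)$, and conclude that $S(\tau)$ is exponential with that rate. In fact you are more careful than the paper on the final step, where the paper simply asserts ``$I$ jumps at time $S(\tau)$'' while you explicitly reconcile $\inf\{s:T(s)\ge\tau\}$ with $\inf\{s:T(s)>\tau\}$; your argument there is sound (any flat stretch of $T$ at level $\tau$ would force $\tau$ to lie in the countable set $\{T(q):q\in\mathbb{Q}_{\ge0}\}$, which has probability zero by independence and absolute continuity of $\tau$).
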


\begin{proof}[Proof of Corollary~\ref{nc}]
Suppose $\Phi(\lambda)>0$ since the result is immediate in the trivial case that $\Phi(\lambda)=0$. Suppose $J=\{J(t)\}_{t\ge0}$ is a two-state Markov jump process that jumps irreversibly from state $0$ to state $1$ at rate $\lambda>0$. 
Hence,
\begin{align}\label{irrevdiag}
{\RR}
=\begin{pmatrix}
-\lambda & 0\\
\lambda & 0
\end{pmatrix}=-P\Lambda P^{-1},\quad
\Lambda=\begin{pmatrix}
0 & 0\\
0 & \lambda
\end{pmatrix},\quad
P=\begin{pmatrix}
0 & -1\\
1 & 1
\end{pmatrix},
\end{align}
and thus Theorem~\ref{markov} implies that the generator of $I(s):=J(T(s))$ is $\widetilde{{\RR}}^{\top}$, where
\begin{align}\label{rtm}
\widetilde{{\RR}}
=-P\Phi(\Lambda)P^{-1}
=\begin{pmatrix}
-\Phi(\lambda) & 0\\
\Phi(\lambda) & 0
\end{pmatrix}.
\end{align}
If $J$ jumps at time $\tau$, then $\tau$ is exponentially distributed with rate $\lambda$. Hence, $I$ jumps at time $S(\tau)$, which must be exponentially distributed with rate $\Phi(\lambda)$ by \eqref{rtm}. 
\end{proof}

%%%%%%%%%%%%%%%%%%%%%%%%%%%%%%%%%%%%%%%%%%%%%%%%%%%%%%%%%%%%%%%%%%%%%%%%%%%%%%%%%%%%%%%%%%%%%%%%%%%%%%%%%%%%%%%%%%%%%%%%%%%%%%%%%%%%%%%%%%%%%%%%%%%%%%%%%%%%%%%%%%%%%%%%%%%%%%%%%%%%%%%%%%%%%%%%%%%%%%%%%%%%%%%%%%%%%%%%%%%%%%%%%%%%%%%%%%%%%%%%%%%%%%%%%%%%%
\section{Examples and numerical simulation}\label{examples}

In this section, we illustrate our results in several examples and compare solutions of the reaction-subdiffusion equations derived in section~\ref{equations} to stochastic simulations of the process constructed in section~\ref{rep}.

\subsection{$n$-state pure subdiffusion in $\R^{d}$}\label{pure}

Consider a population of molecules in $n\ge1$ states that react according to the reaction-rate matrix ${\RR}\in\R^{n\times n}$. Suppose molecules in state $i$ subdiffuse in $\R^{d}$ with (generalized) diffusivity $K_{i}>0$. If $\q=\q(x,t)=(\q_{i}(x,t))_{i=0}^{n-1}$ is the vector of their concentrations, then \eqref{mrf} implies that
\begin{align}\label{exap}
\frac{\partial}{\partial t}\q
&=\K\Delta e^{{\RR}t}\DD(e^{-{\RR}t}\q)+{\RR}\q,\quad x\in\R^{d},\,t>0,
\end{align}
where $\K=\textup{diag}(K_{0},K_{1},\dots,K_{n-1})$ is the diagonal matrix of diffusivities and $\DD$ is the fractional operator in \eqref{DD} with memory kernel $M(t)$ that describes the subdiffusion (in the case of the Riemann-Liouville operator $\DD=\D$ in \eqref{rl}, the memory kernel is $M(t)=(\Gamma(\alpha)t^{1-\alpha})^{-1}$ for $\alpha\in(0,1)$).

Suppose that the reaction rate matrix ${\RR}$ is diagonalizable with ${\RR}=-P\Lambda P^{-1}$ where $\Lambda$ is a diagonal real matrix. In this case, \eqref{exap} can be written as
\begin{align}\label{exap2}
\frac{\partial}{\partial t}\q
&=\K\Delta Pe^{-\Lambda t}\DD(e^{+\Lambda t}P^{-1}\q)+R\q,\quad x\in\R^{d},\,t>0.
\end{align}
If we denote the Laplace transform of a function $f(t)$ by
\begin{align*}
\widehat{f}(s)
:=\int_{0}^{\infty}e^{-st}f(t)\,\dd t,
\end{align*}
then taking the Laplace transform of \eqref{exap2} yields
\begin{align}\label{exap3}
-\q(x,0)
=\K P{\M}P^{-1}\Delta\widehat{\q}(x,s)+({\RR}-sI_{n})\widehat{\q}(x,s),\quad s>0,
\end{align}
where $I_{n}\in\R^{n\times n}$ denotes the identity matrix and ${\M}={\M}(s)$ is the diagonal matrix,
\begin{align*}
{\M}
:=(sI_{n}+\Lambda)\widehat{M}(sI_{n}+\Lambda),
\end{align*}
where $\widehat{M}(sI_{n}+\Lambda)$ is obtained by applying the Laplace transform of the memory kernel $M$ of $\DD$ to the entries of $sI_{n}+\Lambda$ (${\M}$ can also be written in terms of the Laplace exponent of an associated L{\'e}vy subordinator, see \eqref{mphi}). In obtaining \eqref{exap3}, we used that $\widehat{e^{\lambda t}f(t)}(s)=\widehat{f}(s-\lambda)$. Since $s>0$ and $K_{j}>0$ for all $j$, ${\K}$ and ${\M}$ are invertible and we can rewrite \eqref{exap3} as
\begin{align}\label{exap4}
-P{\M}^{-1}P^{-1}{\K}^{-1}\q(x,0)
=\Delta\widehat{\q}(x,s)
+P{\M}^{-1}P^{-1}{\K}^{-1}({\RR}-sI_{n})\widehat{\q}(x,s).
\end{align}
Suppose that we can diagonalize the matrix multiplying $\widehat{\q}(x,s)$ in \eqref{exap2} so that
\begin{align*}
P{\M}^{-1}P^{-1}{\K}^{-1}({\RR}-sI_{n})
=-VDV^{-1},
\end{align*}
where $D=D(s)=\textup{diag}(D_{0}(s),\dots,D_{n-1}(s))$ is a diagonal matrix with strictly positive diagonal entries. Defining $\w:=V^{-1}\q$, it then follows from \eqref{exap4} that
\begin{align}\label{we}
-V^{-1}P{\M}^{-1}P^{-1}{\K}^{-1}\q(x,0)
=\Delta \widehat{\w}(x,s)
-D\widehat{\w}(x,s).
\end{align}

Now, the Green's function $G(x,y;\gamma)$ for the modified Helmholtz equation,
\begin{align*}
-\delta(x-y)
=\Delta_{x} G(x,y;\gamma)-\gamma G(x,y;\gamma),\quad x,y\in\R^{d},
\end{align*}
for $\gamma>0$ in any space dimension $d\ge1$ is
\begin{align*}
G(x,y;\gamma)
=(2 \pi )^{-d/2} \Big(\frac{{r}}{\sqrt{\gamma}}\Big)^{1-d/2} K_{1-d/2}\left({r}\sqrt{\gamma} \right)
=\begin{cases}
e^{-{r}\sqrt{\gamma } }/(2 \sqrt{\gamma }) & \text{if }d=1,\\
K_0({r}\sqrt{\gamma } )/(2 \pi ) & \text{if }d=2,\\
e^{-{r}\sqrt{\gamma } }/(4 \pi  {r}) & \text{if }d=3,
\end{cases}
\end{align*}
where $r:=\|x-y\|>0$ and $K_{m}(z)$ denotes the modified Bessel function of the second kind. Therefore, each component of \eqref{we} can be solved in terms of $G$, and thus we obtain that the solution of the Laplace space equation \eqref{exap3} is
\begin{align}\label{qt}
\widehat{\q}(x,s)
=V\int_{\R^{d}}\mathbf{G}(x,y;D)V^{-1}P{\M}^{-1}P^{-1}{\K}^{-1}\q(y,0)\,\dd y,
\quad x\in\R^{d},\,s>0,
\end{align}
where $\mathbf{G}(x,y;D)$ denotes the Green's matrix,
\begin{align*}
\mathbf{G}(x,y;D)
:=\textup{diag}(G(x,y;D_{0}(s)),\dots,G(x,y;D_{n-1}(s))).
\end{align*}

%%%%%%%%%%%%%%%%%%%%%%%%%%%%%%%%%%%%%%%%%%%%%%%%%%%%%%%%%%%%%%%%%%%%%%%%%%%%%%%%%%%%%%%%%%%%%%%%%%%%%%%%%%%%%%%%%%%%%%%%%%%%%%%%%%%%%%%%%%%%%%%%%%%%%%%%%%%%%%%%%%%%%%%%%%%%%%%%
\subsection{Two-state irreversible pure subdiffusion in $\R^{d}$}\label{irrev}

In the setup of section~\ref{pure}, suppose molecules irreversibly switch from state 0 to state 1 at rate $\lambda>0$,
\begin{align*}
0\overset{\lambda}{\to}1.
\end{align*}
In this case, \eqref{mrf} implies that
\begin{align*}
\frac{\partial}{\partial t}\q
&=\K\Delta e^{{\RR}t}\DD(e^{-{\RR}t}\q)+{\RR}\q,\quad x\in\R^{d},\,t>0,\\
&=\begin{pmatrix}
K_{0} & 0\\
0 & K_{1}
\end{pmatrix}
\begin{pmatrix}
e^{-\lambda t} & 0\\
1-e^{-\lambda t} & 1
\end{pmatrix}
\DD
\begin{pmatrix}
e^{\lambda t} & 0\\
1-e^{\lambda t} & 1
\end{pmatrix}
\Delta
\begin{pmatrix}
\q_{0}\\
\q_{1}
\end{pmatrix}
+\begin{pmatrix}
-\lambda & 0\\
\lambda & 0
\end{pmatrix}
\begin{pmatrix}
\q_{0}\\
\q_{1}
\end{pmatrix}.
\end{align*}
Multiplying the matrices out yields
\begin{align}\label{writeout}
\begin{split}
\tfrac{\partial}{\partial t}\q_{0}
&=K_{0}e^{-\lambda t}\DD(e^{\lambda t}\Delta \q_{0})
-\lambda \q_{0},\\
\tfrac{\partial}{\partial t}\q_{1}
&=K_{1}(1-e^{-\lambda t})\DD(e^{\lambda t}\Delta \q_{0})+K_{1}\DD((1-e^{\lambda t})\Delta \q_{0})
+K_{1}\DD\Delta \q_{1}
+\lambda \q_{0}.
\end{split}
\end{align}
By diagonalizing ${\RR}\in\R^{2\times 2}$ as in \eqref{irrevdiag}, it is straightforward to obtain the explicit, exact solution for \eqref{writeout} in Laplace space by applying the formula in \eqref{qt}.

In Figure~\ref{figirrev}, we plot the solution to \eqref{writeout} (square markers) by numerically inverting the exact Laplace space solution given in \eqref{qt}. In Figure~\ref{figirrev}, we also plot the empirical probability densities (solid curves) of stochastic simulations of individual molecules using the stochastic representation developed in section~\ref{rep}. This figure shows excellent agreement between solutions of the reaction-subdiffusion equations and the corresponding stochastic simulations. Details of the stochastic simulation method are given in section~\ref{details} below. 
In Figure~\ref{figirrev}, we take $d=1$, $\lambda=1$, $K_{0}=1$, $K_{1}=1/2$, and $\DD$ is the Riemann-Liouville operator $\DD=\D$ with $\alpha=1/2$. Also, we assume that all the molecules start at the origin in state 0, which can be written in terms of the Dirac delta function, $\q(x,0)=(\delta(x),0)^{\top}$.

%%%%%%%%%%%%%%%%%%%%%%%%%%%%%%%%%
\begin{figure}
  \centering
             \includegraphics[width=0.465\textwidth]{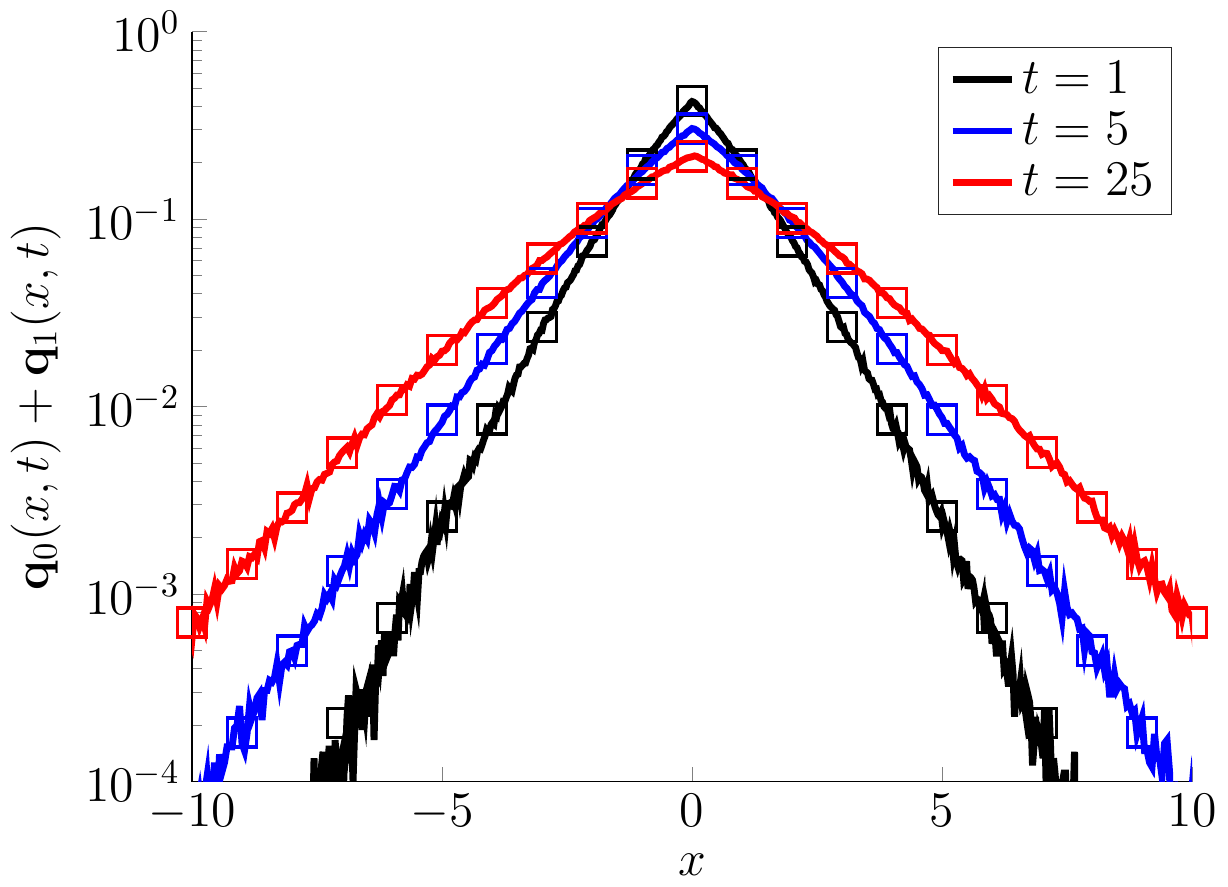}
                              \qquad
               \includegraphics[width=0.465\textwidth]{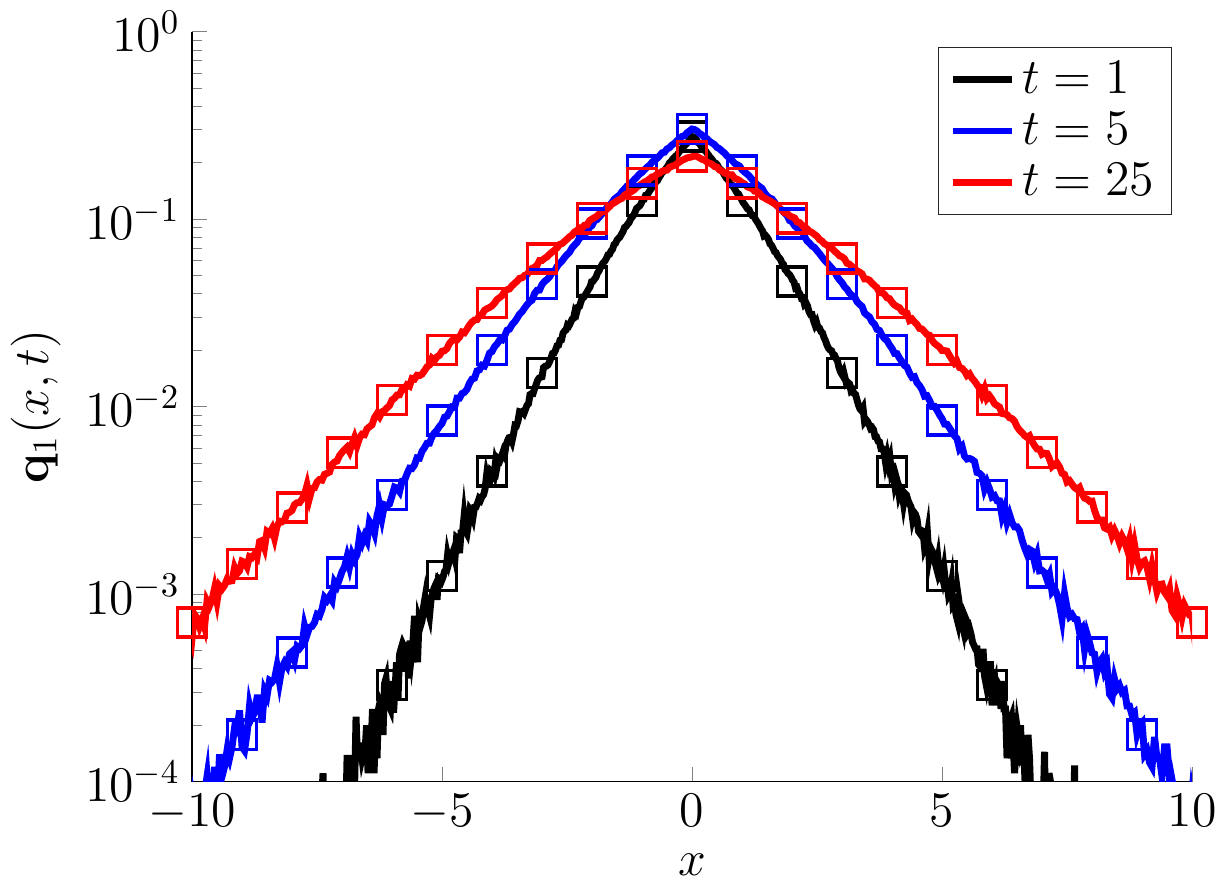}
 \caption{Agreement between reaction-subdiffusion equations and stochastic simulations for the example in section~\ref{irrev}. The square markers are the deterministic solutions of the reaction-subdiffusion equations and the solid curves are the empirical probability densities obtained from stochastic simulations. \textbf{Left}: The total density $\q_{0}(x,t)+\q_{1}(x,t)$ of molecules in either discrete state. \textbf{Right}: The density $\q_{1}(x,t)$ of molecules in state $1$. See the text for more details.}
 \label{figirrev}
\end{figure}
%%%%%%%%%%%%%%%%%%%%%%%%%%%%%%%%%

%%%%%%%%%%%%%%%%%%%%%%%%%%%%%%%%%%%%%%%%%%%%%%%%%%%%%%%%%%%%%%%%%%%%%%%%%%%%%%%%%%%%%%%%%%%%%%%%%%%%%%%%%%%%%%%%%%%%%%%%%%%%%%%%%%%%%%%%%%%%%%%%%%%%%%%%%%%%%%%%%%%%%%%%%%%%%%%%%%%%%%%
\subsection{Two-state reversible pure subdiffusion in $\R^{d}$}\label{rev}

In the setup of section~\ref{pure}, suppose molecules switch reversibly between states 0 and 1,
\begin{align}\label{irrevdiagram}
0\Markov{\lambda_{1}}{\lambda_{0}}1,
\end{align}
where $\lambda_{i}>0$ is the rate of leaving state $i\in\{0,1\}$. 
In this case, \eqref{mrf} implies that
\begin{align}\label{wo2}
\begin{split}
\frac{\partial}{\partial t}\q
&=\K\Delta e^{{\RR}t}\DD(e^{-{\RR}t}\q)+{\RR}\q,\quad x\in\R^{d},\,t>0,\\
&=\begin{pmatrix}
K_{0} & 0\\
0 & K_{1}
\end{pmatrix}
\begin{pmatrix}
\rho_{0}+\rho_{1}e^{-\lambda t} & \rho_{0}-\rho_{0}e^{-\lambda t}\\
\rho_{1}-\rho_{1}e^{-\lambda t} & \rho_{1}+\rho_{0}e^{-\lambda t}
\end{pmatrix}
\DD
\begin{pmatrix}
\rho_{0}+\rho_{1}e^{\lambda t} & \rho_{0}-\rho_{0}e^{\lambda t}\\
\rho_{1}-\rho_{1}e^{\lambda t} & \rho_{1}+\rho_{0}e^{\lambda t}
\end{pmatrix}
\begin{pmatrix}
\Delta\q_{0}\\
\Delta\q_{1}
\end{pmatrix}\\
&\quad+\begin{pmatrix}
-\lambda_{0} & \lambda_{1}\\
\lambda_{0} & -\lambda_{1}
\end{pmatrix}
\begin{pmatrix}
\q_{0}\\
\q_{1}
\end{pmatrix},
\end{split}
\end{align}
where $\lambda:=\lambda_{0}+\lambda_{1}$ and $\rho=(\rho_{0},\rho_{1})^{\top}=(\lambda_{1}/\lambda,\lambda_{0}/\lambda)^{\top}\in\R^{2}$ is the invariant distribution of \eqref{irrevdiagram}. By diagonalizing the reaction rate matrix ${\RR}$, it is straightforward to obtain the exact solution of \eqref{wo2} in Laplace space by applying the formula in \eqref{qt}.

In Figure~\ref{figrev}, we plot the solution to \eqref{wo2} (square markers) by numerically inverting the exact Laplace space solution given in \eqref{qt}. In Figure~\ref{figrev}, we also plot the empirical probability densities (solid curves) of stochastic simulations of individual molecules (again, using the stochastic representation developed in section~\ref{rep}). This figure shows excellent agreement between solutions of the reaction-subdiffusion equations and the corresponding stochastic simulations. In Figure~\ref{figrev}, we take $d=1$, $\lambda_{0}=1$, $\lambda_{1}=2$, $K_{0}=1$, $K_{1}=1/2$, and $\DD$ is the Riemann-Liouville operator $\DD=\D$ with $\alpha=3/4$. We take the initial condition $\q(x,0)=\delta(x)\rho\in\R^{2}$, which means that all the molecules start at the origin and the fraction of molecules in either discrete state is given by the invariant distribution of the two-state Markov process in \eqref{irrevdiagram}.

%%%%%%%%%%%%%%%%%%%%%%%%%%%%%%%%%
\begin{figure}
  \centering
             \includegraphics[width=0.465\textwidth]{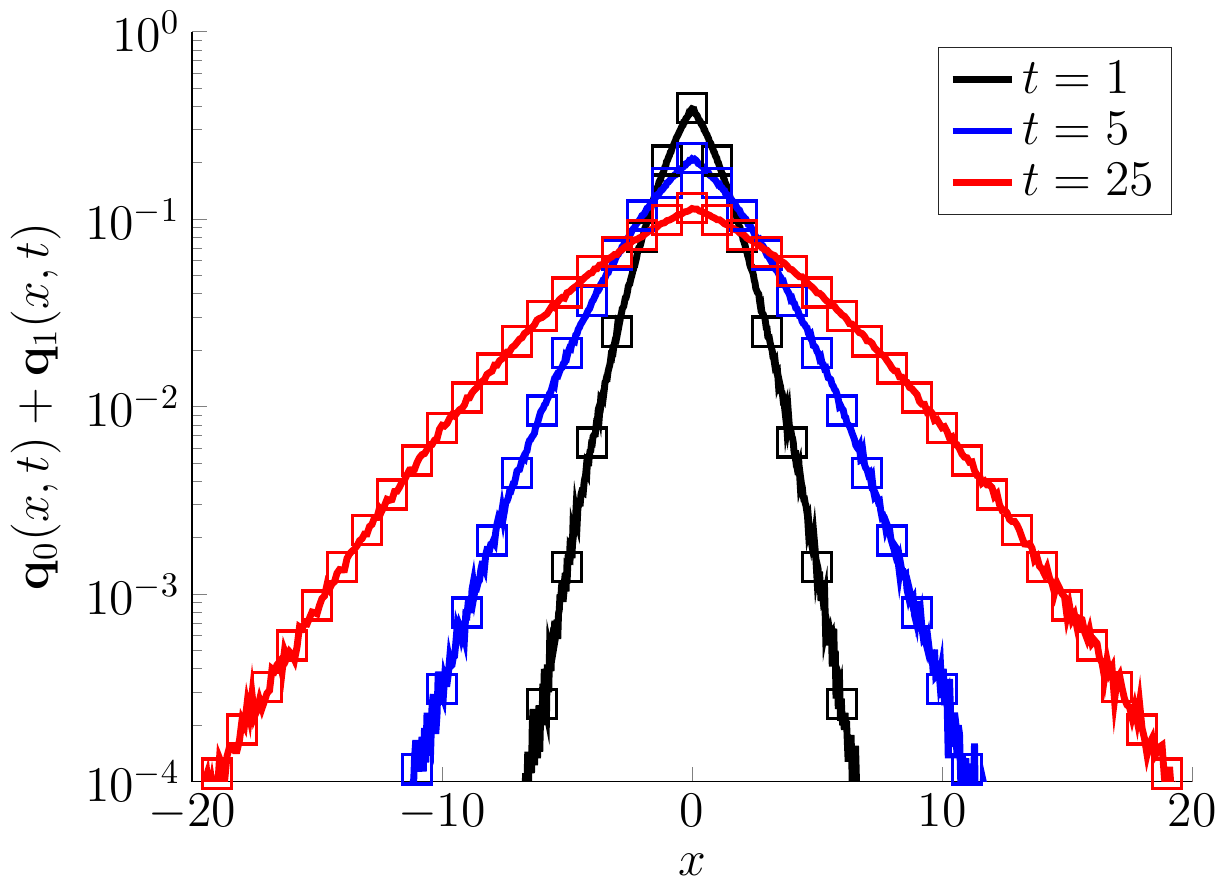}
                              \qquad
               \includegraphics[width=0.465\textwidth]{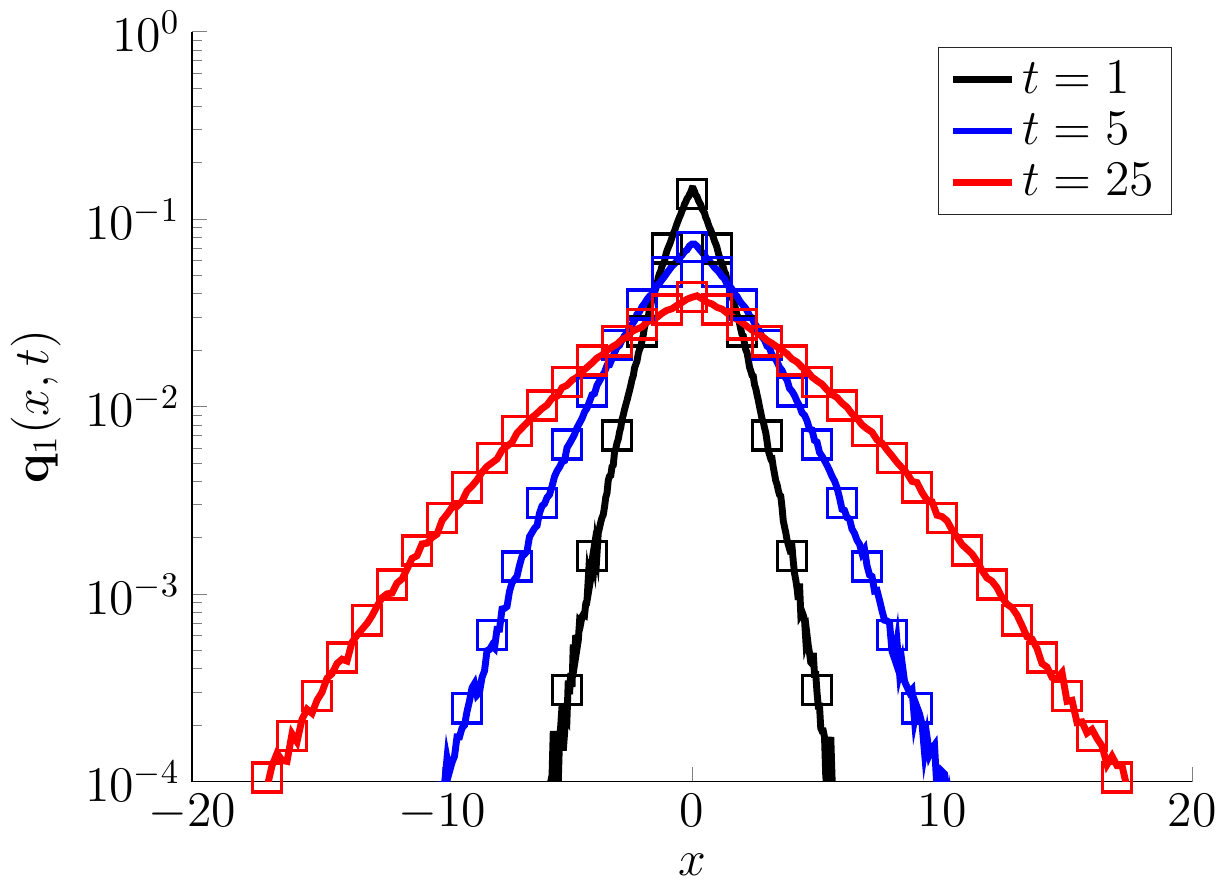}
 \caption{Agreement between reaction-subdiffusion equations and stochastic simulations for the example in section~\ref{rev}. The square markers are the deterministic solutions of the reaction-subdiffusion equations and the solid curves are the empirical probability densities obtained from stochastic simulations. \textbf{Left}: The total density $\q_{0}(x,t)+\q_{1}(x,t)$ of molecules in either discrete state. \textbf{Right}: The density $\q_{1}(x,t)$ of molecules in state $1$. See the text for more details.}
 \label{figrev}
\end{figure}
%%%%%%%%%%%%%%%%%%%%%%%%%%%%%%%%%

\subsection{A stochastic realization}\label{realization}

%%%%%%%%%%%%%%%%%%%%%%%%%%%%%%%%%
\begin{figure}
  \centering
             \includegraphics[width=0.465\textwidth]{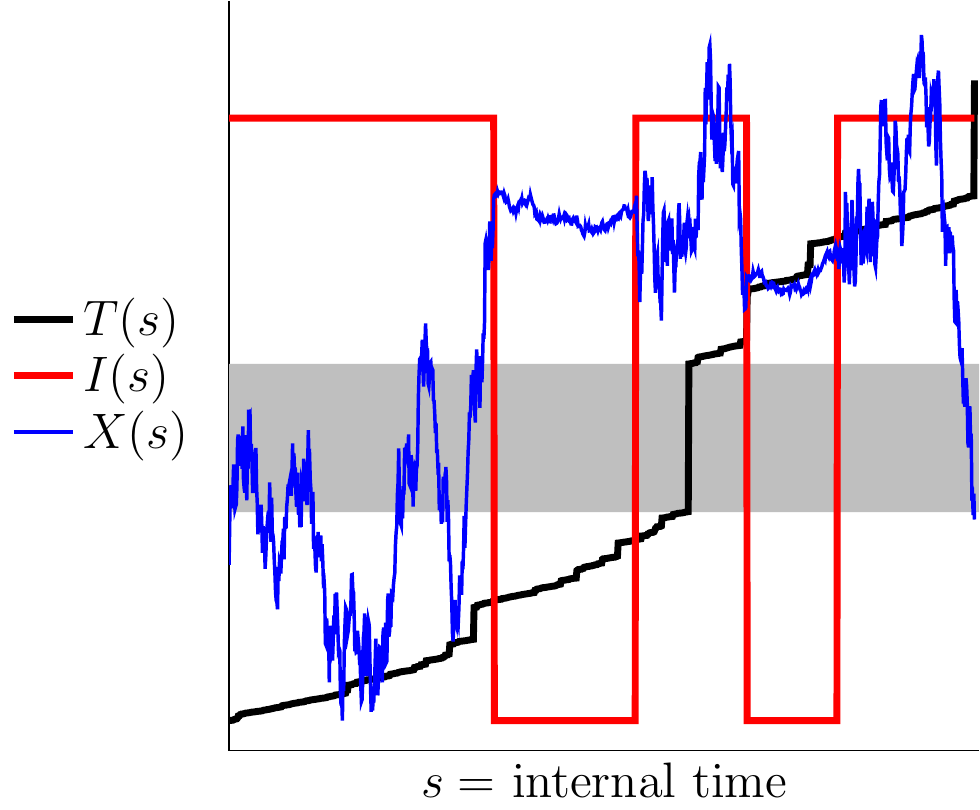}
                              \qquad
               \includegraphics[width=0.465\textwidth]{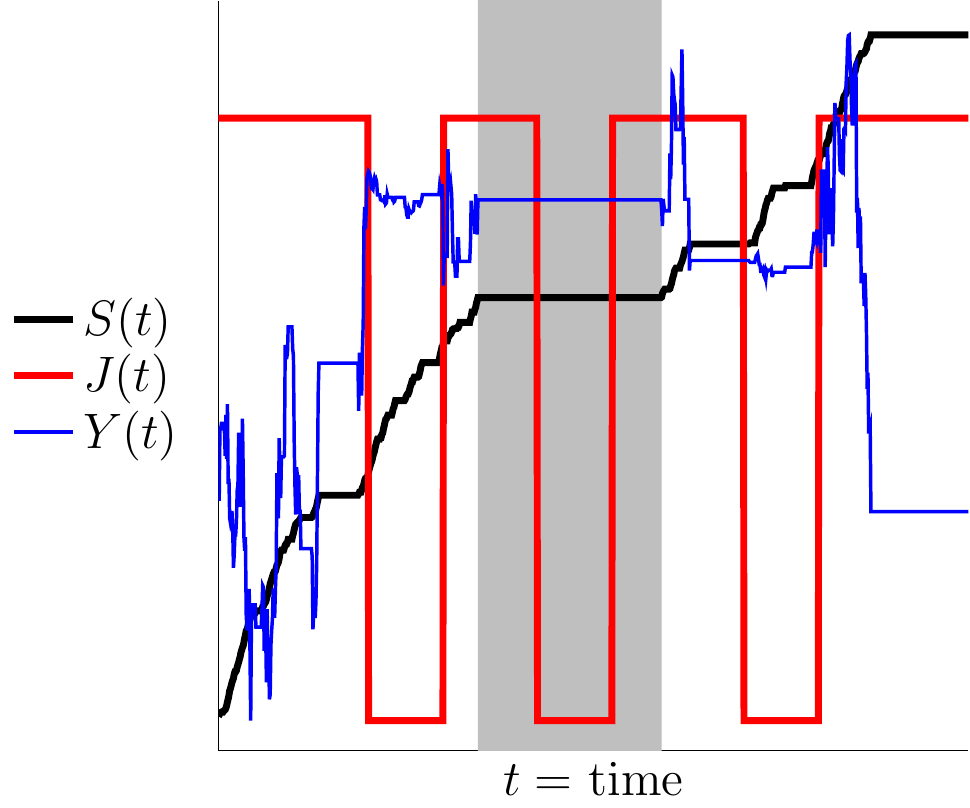}
 \caption{Stochastic realizations of the stochastic processes described in section~\ref{realization}. The vertical axes are shifted and scaled for visualization. See the text in section~\ref{realization} for more details.}
 \label{figschem}
\end{figure}
%%%%%%%%%%%%%%%%%%%%%%%%%%%%%%%%%

In Figure~\ref{figschem}, we plot a sample realization of the stochastic processes underlying the example in section~\ref{rev} above. In particular, in the left panel of Figure~\ref{figschem}, we plot the subordinator $T(s)$, the Markov jump process $I(s):=J(T(s))$, where $J$ is the jump process that jumps according to \eqref{irrevdiagram}, and the normal diffusion process $X(s)$ whose diffusivity is $K_{I(s)}$. That is, the diffusivity of $X$ randomly switches between $K_{0}$ and $K_{1}$ according to $I$. In the right panel of Figure~\ref{figschem}, we plot the inverse subordinator $S(t)$ (defined in \eqref{S}), the jump process $J(t)$, and the reaction-subdiffusion process $Y(t):=X(S(t))$. In this plot, $T$ is an $\alpha$-stable subordinator with $\alpha=0.8$.

There are several things to notice from Figure~\ref{figschem}. First, since $S$ is the inverse of $T$, the graph of $S$ is obtained from the graph of $T$ by merely reversing the horizontal and vertical axes. Therefore, jumps of $T$ correspond to flat periods, or ``pauses'' of $S$. For example, the gray shaded regions in the two panels highlight a jump of $T$ and the corresponding pause of $S$. 

Second, notice that the path of $X$ is much more variable when $I(s)=1$ compared to when $I(s)=0$. This reflects the fact that we take $K_{1}/K_{0}=100$ in this plot. Similarly, the path of $Y$ is much more variable when $J(t)=1$ compared to when $J(t)$, except when $S$ is paused.

Third, notice in the gray region of the right panel that $J$ jumps from 1 to 0 and then back from 0 to 1. These two jumps of $J$ occur during a pause of $S$ (the gray region in the right panel), which corresponds to a jump of $T$ (the gray region in the left panel). Therefore, the process $I(s):=J(T(s))$ ``skips'' these jumps of $J$. Indeed, notice that the path of $I$ in the left panel has only two visits to state 0 whereas the path of $J$ in the right panel has three visits to state 0.

Finally, the fact that $J$ can jump during a pause of $S$ reflects the assumption of first-order reaction rates in the reaction-subdiffusion equations in this paper. In particular, the reactions are unaffected by the factors which cause the subdiffusion. This is a key distinction between reaction-subdiffusion equations with first-order reaction rates (sometimes called ``activation-limited'' \cite{nepomnyashchy2016}) and the so-called ``subdiffusion-limited'' model \cite{nepomnyashchy2016, lawley2020sr2}. See the Discussion section for more on how our results compare to the subdiffusion-limited model.

%%%%%%%%%%%%%%%%%%%%%%%%%%%%%%%%%%%%%%%%%%%%%%%%%%%%%%%%%%%%%
\subsection{$I(s):=J(T(s))$ can have a different jump network than $J(t)$}\label{skip}

Suppose $J=\{J(t)\}_{t\ge0}$ is a 3-state Markov process on $\{0,1,2\}$ that jumps according to 
\begin{align*}
0\overset{a\lambda}{\to}1,\quad
1\overset{\lambda}{\to}2,\quad
2\overset{\lambda}{\to}0,
\end{align*}
for some rate $\lambda>0$ and some constant $a>4$. Importantly, $J$ cannot jump directly from 0 to 2, from 1 to 0, or from 2 to 1. It is straightforward to diagonalize the transition rate matrix of $J$ as
\begin{align*}
{\RR}
=\begin{pmatrix}
-a\lambda & 0 & \lambda\\
a\lambda & -\lambda & 0\\
0 & \lambda & -\lambda
\end{pmatrix}
=-P\Lambda P^{-1},
\end{align*}
where $\Lambda$ and $P$ are real matrices and $\Lambda$ is diagonal. We omit the formulas of $\Lambda$ and $P$ for brevity, but we note that $a>4$ ensures that $\Lambda$ and $P$ are real. If $T=\{T(s)\}_{s\ge0}$ is an independent L{\'e}vy subordinator with Laplace exponent $\Phi(\lambda)$, then Theorem~\ref{markov} implies that time changed process $I=\{I(s)\}_{s\ge0}:=\{J(T(s))\}_{s\ge0}$ is a Markov jump process with transition rate matrix given by $\widetilde{{\RR}}
=-P\Phi(\Lambda)P^{-1}$. 
Importantly, the structure of the transition matrix $\widetilde{{\RR}}$ of $I$ is different from the structure of the transition matrix ${\RR}$ of $J$. In particular, as long as the Laplace exponent $\Phi$ of $T$ is not linear (which would correspond to the trivial subordinator $T(s)=bs$ for some $b\ge0$), $\widetilde{{\RR}}$ will generally have all nonzero entries, which implies that $I$ will allow jumps between states $i$ and $j\neq i$ for any $i,j\in\{0,1,2\}$. This reflects the fact that $I$ may ``skip'' states visited by $J$ since $T$ is discontinuous if $\Phi$ is nonlinear.

%%%%%%%%%%%%%%%%%%%%%%%%%%%%%%%%%%%%%%%%%%%%%%%%%%%%%%%%%%%%%
\subsection{Stochastic simulation method}\label{details}

We now describe how the stochastic representation found in section~\ref{rep} can be used to numerically simulate stochastic paths of subdiffusing and reacting molecules whose deterministic concentrations satisfy the reaction-subdiffusion equations in \eqref{mrf}. This is the stochastic simulation method used in the sections above.

We first use the Gillespie algorithm \cite{gillespie1977} to simulate statistically exact paths of $J$. We then simulate $T$ on a discrete time grid $\{s_{k}\}_{k}$ for $s_{k}=k\Delta s$ for some $\Delta s>0$. In the examples above, $T$ is an $\alpha$-stable subordinator with $\alpha\in(0,1)$ and we follow the method of Magdziarz et al.\ \cite{magdziarz2007} to simulate $T$. In particular, $T$ is exactly simulated on the discrete grid $\{s_{k}\}_{k}$ according to
\begin{align*}
T(s_{k+1})
=T(s_{k})+(\Delta s)^{1/\alpha}\Theta_{k},\quad k\ge0,
\end{align*}
where $T(s_{0})=T(0)=0$ and $\{\Theta_{k}\}_{k\in\mathbb{N}}$ is an independent and identically distributed sequence of realizations of 
\begin{align*}
\Theta
=\frac{\sin(\alpha(V+\pi/2)}{(\cos(V))^{1/\alpha}}\bigg(\frac{\cos(V-\alpha(V+\pi/2))}{E}\bigg)^{\frac{1-\alpha}{\alpha}},
\end{align*}
where $V$ is uniformly distributed on $(-\pi/2,\pi/2)$ and $E$ is an independent unit rate exponential random variable. See \cite{carnaffan2017} for simulation methods when $T$ is not an $\alpha$-stable subordinator.

Having obtained $J=\{J(t)\}_{t}$ and $\{T(s_{k})\}_{k}$, we immediately obtain $I$ on the discrete time grid $\{s_{k}\}_{k}$ via $I(s_{k}):=J(T(s_{k}))$. We then approximate $X$ in \eqref{c2} on $\{s_{k}\}_{k}$ via the Euler-Maruyama method \cite{kloeden1992}.

Next, having obtained $\{T(s_{k})\}_{k}$, we approximate the inverse $S$ in \eqref{S} on a discrete time grid $\{t_{m}\}_{m}$ with $t_{m}=m\Delta t$ for some $\Delta t>0$. In particular, we follow \cite{magdziarz2007} and set $S(t_{m})=s_{k}$ where $k$ is the unique index such that $T(s_{k-1})<t_{m}\le T(s_{k})$. Finally, we obtain $Y$ on the discrete time grid $\{t_{m}\}_{m}$ via linear interpolation,
\begin{align*}
Y(t_{m})
=\Big(\frac{S(t_{m})-s_{k}}{s_{k+1}-s_{k}}\Big)X(s_{k+1})
+\Big(\frac{s_{k+1}-S(t_{m})}{s_{k+1}-s_{k}}\Big)X(s_{k}),\quad m\ge1,
\end{align*}
where $k$ is the largest index such that $s_{k}\le S(t_{m})\le s_{k+1}$.

In the stochastic simulations in sections~\ref{irrev} and \ref{rev}, we take $\Delta s=\Delta t=t10^{-3}$ where $t$ is either $1$, $5$, or $25$ in Figures~\ref{figirrev} and \ref{figrev}. Each empirical probability density plotted in these figures is the result of $3\times10^{6}$ independent trials.

%%%%%%%%%%%%%%%%%%%%%%%%%%%%%%%%%%%%%%%%%%%%%%%%%%%%%%%%%%%%%%%%%%%%%%%%%%%%%%%%%%%%%%%%%%%%%%%%%%%%%%%%%%%%%%%%%%%%%%%%%%%%%%%%%%%%%%%%%%%%%%%%%%%%%%%%%%%%%%%%%%%%%%%%%%%%%%%%%%%%%%%%%%%%%%%%%%%%%%%%%%%%%%%%%%%%%%%%%%%%%%%%%%%%%%%%%%%%%%%%%%%%%%%%%%%%%
\section{Discussion}

In this paper, we derived reaction-subdiffusion equations for molecular species which react at first-order rates and subdiffuse in $\R^{d}$ according to a fractional Fokker-Planck equation with general space-dependent diffusivities and space-dependent drifts and a time-fractional operator involving a general memory kernel. If the reaction rate matrix ${\RR}\in\R^{n\times n}$ describes the reactions, species $i\in\{0,\dots,n-1\}$ subdiffuses with (generalized) diffusivity $K_{i}>0$, and the time-fractional operator is the Reimann-Liouville fractional derivative, then the reaction-subdiffusion equations for the vector of molecular concentrations $\q=\q(x,t)=(\q_{i}(x,t))_{i=0}^{n-1}$ are
\begin{align}\label{mrfd}
\frac{\partial}{\partial t}\q
=\K\Delta e^{{\RR}t}\D (e^{-{\RR}t}\q)+{\RR}\q,\quad x\in\R^{d},\,t>0,
\end{align}
where $\K=\textup{diag}(K_{0},\dots,K_{n-1})$ is the diagonal matrix of diffusivities. We obtained these equations by using results on time-dependent fractional Fokker-Planck equations \cite{magdziarz2016, carnaffan2017} and applying methods which were developed to study randomly switching parabolic equations \cite{lawley15sima, PB1, lawley16bvp}. In addition, we found the stochastic representation of individual molecules whose deterministic concentrations satisfy the reaction-subdiffusion equations. We illustrated our results in several examples and compared  solutions of the reaction-subdiffusion equations to stochastic simulations of individual molecules.

%%%%%%%%%%%%%%%%%%%%%%%%%%%%%%%%%%%%%%%%%%%%%%%%%%%%%%%%%%%%%%%%%%%%%%%%%%%%%%%%%%%%%%%%%%%%%%%%%%%%%%%%%%%%%%%%%%%%%%%%%%%%%%%%%%%%%%%%%%%%
\subsection{State-independent dynamics}

Our analysis allows different molecular species to have different movement dynamics (i.e.\ different diffusivities, or more generally, different space-dependent diffusivities and drifts). Previous derivations of reaction-subdiffusion equations with first-order reactions assume that all the molecular species have the same movement dynamics (typically the same constant diffusivity and zero drift). This began with \cite{sokolov2006}, in which reaction-subdiffusion equations were derived for an irreversible reaction between two molecular species which subdiffuse in one dimension. Using different approaches, \cite{henry2006} and \cite{schmidt2007} derived equivalent equations. These results were generalized in \cite{langlands2008} to allow reversible reactions between any number of molecular species which subdiffuse in one dimension (again, assuming all species have the same diffusivity). These works employed various mathematical methods in their derivations, such as the theory of continuous-time random walks, asymptotic expansions, Laplace transforms, Fourier transforms, and Tauberian theorems. However, if all the molecular species have the same movement dynamics, it was recently proven that the reaction-subdiffusion equations are an immediate consequence of the probabilistic independence of the spatial position and molecular species type \cite{lawley2020sr1}.

%%%%%%%%%%%%%%%%%%%%%%%%%%%%%%%%%%%%%%%%%%%%%%%%%%%%%%%%%%%%%%%%%%%%%%%%%%%%%%%%%%%%%%%%%%%%%%%%%%%%%%%%%%%%%%%%%%%%%%%%%%%%%%%%%%%%%%%%%%%%
\subsection{Previous work on state-dependent dynamics}

We are not aware of any previous works that derive reaction-subdiffusion equations with first-order reactions for molecular species with different movement dynamics. For the case of species-dependent movement dynamics, certain reaction-subdiffusion equations were claimed in the review \cite{nepomnyashchy2016} and a different set of reaction-subdiffusion equations were later claimed in \cite{yang2021}. Specifically, for the scenario corresponding to \eqref{mrfd}, the following reaction-subdiffusion equations were claimed in equation~(3.5) in \cite{nepomnyashchy2016},
\begin{align}\label{ne}
\frac{\partial}{\partial t}\q
=\Delta e^{{\RR}t}\K\D(e^{-{\RR}t}\q)+{\RR}\q.\end{align}
It was claimed in \cite{nepomnyashchy2016} that \eqref{ne} can be derived from the continuous-time random walk model, but no derivation was given. We note that \eqref{ne} differs from \eqref{mrfd} since $e^{{\RR}t}$ and $\K$ do not typically commute. A more recent paper \cite{yang2021} claimed that \eqref{mrfd} can be derived from the continuous-time random walk model, but no derivation was given.

%%%%%%%%%%%%%%%%%%%%%%%%%%%%%%%%%%%%%%%%%%%%%%%%%%%%%%%%%%%%%%%%%%%%%%%%%%%%%%%%%%%%%%%%%%%%%%%%%%%%%%%%%%%%%%%%%%%%%%%%%%%%%%%%%%%%%%%%%%%%
\subsection{Comparison to subdiffusion-limited model}

In this paper, we assumed that reactions occur at first-order rates. This is sometimes called the activation-limited model \cite{nepomnyashchy2016}. Activation-limited models are appropriate when the instantaneous reaction rates are unaffected by the factors causing subdiffusion.

An alternative model is the subdiffusion-limited model, which assumes that the physical factors that slow down the diffusion also slow down the reactions in the same way \cite{nepomnyashchy2016, lawley2020sr2}. In the case of subdiffusion-limited reactions, the reaction-subdiffusion equations are obtained by applying the fractional operator to both the diffusion and the reaction terms in the corresponding reaction-diffusion equation \cite{nepomnyashchy2016}. For example, the subdiffusion-limited analog to the activation-limited equations in \eqref{mrfd} is
\begin{align}\label{sl}
\frac{\partial}{\partial t}\overline{\q}
=\D\big(\K\Delta\overline{\q}+{\RR}\overline{\q}\big),\quad x\in\R^{d},\,t>0.
\end{align}

We now compare the stochastic description of molecules in the subdiffusion-limited model in \eqref{sl} (using the results of \cite{lawley2020sr2}) to the stochastic description of molecules in the activation-limited model in \eqref{mrfd} that we found in section~\ref{rep}. Beginning with the subdiffusion-limited model in \eqref{sl}, let $T=\{T(s)\}_{s\ge0}$ be an $\alpha$-stable subordinator with inverse $S=\{S(t)\}_{t\ge0}$. Let $\overline{I}=\{\overline{I}(s)\}_{s\ge0}$ be a Markov jump process on $\{0,\dots,n-1\}$ with generator ${\RR}^{\top}\in\R^{n\times n}$ that is independent of $T$. Suppose $\overline{X}=\{\overline{X}(s)\}_{s\ge0}$ satisfies the $\overline{I}$-dependent stochastic differential equation,
\begin{align*}
\dd \overline{X}(s)
=\sqrt{2K_{\overline{I}(s)}}\,\dd W(s),
\end{align*}
where $W=\{W(s)\}_{s\ge0}$ is a standard $d$-dimensional Brownian motion independent of $T$ and $\overline{I}$. That is, $\overline{X}$ is a normal diffusion process that diffuses with diffusivity $K_{i}>0$ when $\overline{I}(s)=i$. Define $\overline{J}=\{\overline{J}(t)\}_{t\ge0}$ as the random time change of $\overline{I}$,
\begin{align*}
\overline{J}(t)
&:=\overline{I}(S(t)),\quad t\ge0,
\end{align*}
and define the subdiffusion process $\overline{Y}=\{\overline{Y}(t)\}_{t\ge0}$ as the random time change of $\overline{X}$,
\begin{align*}
\overline{Y}(t)
:=\overline{X}(S(t)),\quad t\ge0.
\end{align*}
The joint density of $(\overline{Y}(t),\overline{J}(t))$ satisfies the subdiffusion-limited model in \eqref{sl} \cite{lawley2020sr2}.

Using our results in section~\ref{rep}, we now give the stochastic description of the activation-limited model in \eqref{mrfd}. Let $T$ and $S$ be as above and let $J=\{J(t)\}_{t\ge0}$ be a Markov jump process on $\{0,\dots,n-1\}$ with generator ${\RR}^{\top}\in\R^{n\times n}$ that is independent of $T$. Define $I=\{I(s)\}_{s\ge0}$ as the random time change of $J$,
\begin{align*}
I(s)
:=J(T(s)),\quad s\ge0.
\end{align*}
Suppose $X=\{X(s)\}_{s\ge0}$ satisfies the $I$-dependent stochastic differential equation,
\begin{align*}
\dd X(s)
=\sqrt{2K_{I(s)}}\,\dd W(s),
\end{align*}
where $W=\{W(s)\}_{s\ge0}$ is a standard $d$-dimensional Brownian motion independent of $T$ and $J$. Finally, define the subdiffusion process $Y=\{Y(t)\}_{t\ge0}$ as the random time change of $X$,
\begin{align*}
Y(t)
:=X(S(t)),\quad t\ge0.
\end{align*}

We now describe the subtle difference between these two constructions which ultimately underlies the difference between the subdiffusion-limited equations in \eqref{sl} and the activation-limited equations in \eqref{mrfd}. Notice that by Theorem~\ref{markov}, $\overline{I}$ and $I$ are both Markov jump processes (though with different generators). Notice further that $\overline{X}$ and $X$ are both normal diffusion processes whose diffusivities switch according to the paths of $\overline{I}$ and $I$, respectively. In addition, the subdiffusion processes $\overline{Y}$ and $Y$ are obtained by respectively subordinating $\overline{X}$ and $X$ according to $S$. The key difference is that in the subdiffusion-limited construction, $\overline{I}$ and $\overline{X}$ are independent of $T$ and $S$, whereas $\overline{J}$ depends on $T$ and $S$. In contrast, in the activation-limited construction, $J$ is independent of $T$ and $S$, whereas $I$ and $X$ depend on $T$ and $S$. In particular, $\overline{J}$ is a time change of $\overline{I}$, but $I$ is a time change of $J$. Furthermore, while $J$ is a Markov process, $\overline{J}$ is in general not Markovian. For example, the times between jumps of $\overline{J}$ have a Mittag-Leffler distribution if $T$ is an $\alpha$-stable subordinator \cite{lawley2020sr2}.

% If you have acknowledgments, this puts in the proper section head.
%\medskip
%\begin{acknowledgments}
%The authors were supported by the National Science Foundation (DMS-1944574 and DMS-1814832).
%\end{acknowledgments}

%%%%%%%%%%%%%%%%%%%%%%%%%%%%%%%%%%%%%%%%%%%%%%%%%%%%%%%%%%%%%%%%%%%%%%%%%%%%%%%%%%%%%%%%%%%%%%%%%%%%%%%%%%%%%%%%%%%%%%%%%%%%%%%%%%%%%%%%%%%%%%%%%%%%%%%%%%%%%%%%%%%%%%%%%%%%%%%%%%%%%%%%%%%%%%%%%%%%%%%%%%%%%%%%%%%%%%%%%%%%%%%%%%%%%%%%%%%%%%%%%%%%%%%%%%%%%%%%%%%%%%%%%%%%%%%%%%%%%%%%%%%%%%%%%%%%%%%%%%%%%%%%%%%%%%%%%%%%%%%%%%%%%%%%%%%%%%%%%%%%%%%%%%%%%%%%%%%%%%%%%%%%%%%%%%%%%%%%%%%%%%%%%%%%%%%%%%%%%%%%%%%%%%%%%%%%%%%%%%%%%%%%%%%%%%%%%%%%%%%%%%%%%%%%%%%%%%%%%%%%%%%%%%%%%%%%%%%%%%%%%%%%%%%%%%%%%%%%%%%%%%%%%%%%%%%%%%%%%%%%%%%%%%%%%%%%%%%%%%%%%%%%%%%%%%%%%%%%%%%%%%%%%%%%%%%%%%%

% Create the reference section using BibTeX:
\bibliography{library.bib}

\begin{thebibliography}{10}

\bibitem{barkai2012}
{\sc E.~Barkai, Y.~Garini, and R.~Metzler}, {\em Strange kinetics of single
  molecules in living cells}, Phys. Today, 65 (2012), p.~29.

\bibitem{bertoin1996}
{\sc J.~Bertoin}, {\em {L}{\'e}vy processes}, vol.~121, Cambridge {U}niversity
  {P}ress, 1996.

\bibitem{PB1}
{\sc P.~C. Bressloff and S.~D. Lawley}, {\em Moment equations for a piecewise
  deterministic {PDE}}, J Phys A, 48 (2015), p.~105001.

\bibitem{PB13}
{\sc P.~C. Bressloff, S.~D. Lawley, and P.~Murphy}, {\em Protein concentration
  gradients and switching diffusions}, Phys Rev E, 99 (2019), p.~032409.

\bibitem{cantrell2004}
{\sc R.~S. Cantrell and C.~Cosner}, {\em Spatial ecology via reaction-diffusion
  equations}, John Wiley \& Sons, 2004.

\bibitem{carnaffan2017}
{\sc S.~Carnaffan and R.~Kawai}, {\em Solving multidimensional fractional
  {F}okker--{P}lanck equations via unbiased density formulas for anomalous
  diffusion processes}, SIAM Journal on Scientific Computing, 39 (2017),
  pp.~B886--B915.

\bibitem{durrett2019}
{\sc R.~Durrett}, {\em Probability: theory and examples}, Cambridge university
  press, 2019.

\bibitem{flegg2015}
{\sc J.~A. Flegg, S.~N. Menon, P.~K. Maini, and D.~McElwain}, {\em On the
  mathematical modeling of wound healing angiogenesis in skin as a
  reaction-transport process}, Frontiers in physiology, 6 (2015), p.~262.

\bibitem{galochkina2017}
{\sc T.~Galochkina, A.~Bouchnita, P.~Kurbatova, and V.~Volpert}, {\em
  Reaction-diffusion waves of blood coagulation}, Mathematical biosciences, 288
  (2017), pp.~130--139.

\bibitem{gatenby1996}
{\sc R.~A. Gatenby and E.~T. Gawlinski}, {\em A reaction-diffusion model of
  cancer invasion}, Cancer research, 56 (1996), pp.~5745--5753.

\bibitem{gillespie1977}
{\sc D.~T. Gillespie}, {\em Exact stochastic simulation of coupled chemical
  reactions}, The journal of physical chemistry, 81 (1977), pp.~2340--2361.

\bibitem{golding2006}
{\sc I.~Golding and E.~C. Cox}, {\em Physical nature of bacterial cytoplasm},
  Physical review letters, 96 (2006), p.~098102.

\bibitem{henry2006}
{\sc B.~Henry, T.~Langlands, and S.~Wearne}, {\em Anomalous diffusion with
  linear reaction dynamics: From continuous time random walks to fractional
  reaction-diffusion equations}, Physical Review E, 74 (2006), p.~031116.

\bibitem{hofling2013}
{\sc F.~H{\"o}fling and T.~Franosch}, {\em Anomalous transport in the crowded
  world of biological cells}, Reports on Progress in Physics, 76 (2013),
  p.~046602.

\bibitem{holmes1994}
{\sc E.~E. Holmes, M.~A. Lewis, J.~Banks, and R.~Veit}, {\em Partial
  differential equations in ecology: spatial interactions and population
  dynamics}, Ecology, 75 (1994), pp.~17--29.

\bibitem{kimura1964}
{\sc M.~Kimura}, {\em Diffusion models in population genetics}, Journal of
  Applied Probability, 1 (1964), pp.~177--232.

\bibitem{klafter2005}
{\sc J.~Klafter and I.~M. Sokolov}, {\em Anomalous diffusion spreads its
  wings}, Physics world, 18 (2005), p.~29.

\bibitem{kloeden1992}
{\sc P.~E. Kloeden and E.~Platen}, {\em Numerical {Solution} of {Stochastic}
  {Differential} {Equations}}, Springer, Berlin ; New York, corrected
  edition~ed., Aug. 1992.

\bibitem{landge2020}
{\sc A.~N. Landge, B.~M. Jordan, X.~Diego, and P.~M{\"u}ller}, {\em Pattern
  formation mechanisms of self-organizing reaction-diffusion systems},
  Developmental biology, 460 (2020), pp.~2--11.

\bibitem{langlands2008}
{\sc T.~Langlands, B.~I. Henry, and S.~L. Wearne}, {\em Anomalous subdiffusion
  with multispecies linear reaction dynamics}, Physical Review E, 77 (2008),
  p.~021111.

\bibitem{lawley16bvp}
{\sc S.~D. Lawley}, {\em Boundary value problems for statistics of diffusion in
  a randomly switching environment: {PDE} and {SDE} perspectives}, SIAM J Appl
  Dyn Syst, 15 (2016).

\bibitem{lawley2020sr1}
{\sc S.~D. Lawley}, {\em Anomalous reaction-diffusion equations for linear
  reactions}, Physical Review E, 102 (2020), p.~032117.

\bibitem{lawley2020sr2}
\leavevmode\vrule height 2pt depth -1.6pt width 23pt, {\em Subdiffusion-limited
  fractional reaction-subdiffusion equations with affine reactions: Solution,
  stochastic paths, and applications}, Physical Review E, 102 (2020),
  p.~042125.

\bibitem{lawley15sima}
{\sc S.~D. Lawley, J.~C. Mattingly, and M.~C. Reed}, {\em Stochastic switching
  in infinite dimensions with applications to random parabolic {PDE}}, SIAM J
  Math Anal, 47 (2015), pp.~3035--3063.

\bibitem{magdziarz2007}
{\sc M.~Magdziarz, A.~Weron, and K.~Weron}, {\em Fractional fokker-planck
  dynamics: Stochastic representation and computer simulation}, Physical Review
  E, 75 (2007), p.~016708.

\bibitem{magdziarz2016}
{\sc M.~Magdziarz and T.~Zorawik}, {\em Stochastic representation of a
  fractional subdiffusion equation. the case of infinitely divisible waiting
  times, l{\'e}vy noise and space-time-dependent coefficients}, Proceedings of
  the American Mathematical Society, 144 (2016), pp.~1767--1778.

\bibitem{maini2012}
{\sc P.~K. Maini, T.~E. Woolley, R.~E. Baker, E.~A. Gaffney, and S.~S. Lee},
  {\em Turing's model for biological pattern formation and the robustness
  problem}, Interface focus, 2 (2012), pp.~487--496.

\bibitem{maobook}
{\sc X.~Mao and C.~Yuan}, {\em Stochastic {Differential} {Equations} with
  {Markovian} {Switching}}, Imperial College Press, Jan. 2006.

\bibitem{mcgillen2014}
{\sc J.~B. McGillen, E.~A. Gaffney, N.~K. Martin, and P.~K. Maini}, {\em A
  general reaction--diffusion model of acidity in cancer invasion}, Journal of
  mathematical biology, 68 (2014), pp.~1199--1224.

\bibitem{metzler1999}
{\sc R.~Metzler, E.~Barkai, and J.~Klafter}, {\em Anomalous diffusion and
  relaxation close to thermal equilibrium: A fractional {Fokker-Planck}
  equation approach}, Physical review letters, 82 (1999), p.~3563.

\bibitem{murray1986}
{\sc J.~D. Murray, E.~A. Stanley, and D.~L. Brown}, {\em On the spatial spread
  of rabies among foxes}, Proceedings of the Royal society of London. Series B.
  Biological sciences, 229 (1986), pp.~111--150.

\bibitem{nepomnyashchy2016}
{\sc A.~Nepomnyashchy}, {\em Mathematical modelling of subdiffusion-reaction
  systems}, Mathematical Modelling of Natural Phenomena, 11 (2016), pp.~26--36.

\bibitem{norris1998}
{\sc J.~Norris}, {\em {Markov Chains}}, Statistical {\&} Probabilistic
  Mathematics, Cambridge University Press, 1998.

\bibitem{oliveira2019}
{\sc F.~A. Oliveira, R.~Ferreira, L.~C. Lapas, and M.~H. Vainstein}, {\em
  Anomalous diffusion: A basic mechanism for the evolution of inhomogeneous
  systems}, arXiv preprint arXiv:1902.03157,  (2019).

\bibitem{samko1993}
{\sc S.~G. Samko, A.~A. Kilbas, O.~I. Marichev, et~al.}, {\em Fractional
  integrals and derivatives}, vol.~1, Gordon and Breach Science Publishers,
  Yverdon Yverdon-les-Bains, Switzerland, 1993.

\bibitem{sato1999}
{\sc K.-i. Sato, S.~Ken-Iti, and A.~Katok}, {\em L{\'e}vy processes and
  infinitely divisible distributions}, Cambridge university press, 1999.

\bibitem{schmidt2007}
{\sc M.~Schmidt, F.~Sagu{\'e}s, and I.~Sokolov}, {\em Mesoscopic description of
  reactions for anomalous diffusion: a case study}, Journal of Physics:
  Condensed Matter, 19 (2007), p.~065118.

\bibitem{sherratt1990}
{\sc J.~A. Sherratt and J.~D. Murray}, {\em Models of epidermal wound healing},
  Proceedings of the Royal Society of London. Series B: Biological Sciences,
  241 (1990), pp.~29--36.

\bibitem{sokolov2012}
{\sc I.~M. Sokolov}, {\em Models of anomalous diffusion in crowded
  environments}, Soft Matter, 8 (2012), pp.~9043--9052.

\bibitem{sokolov2006}
{\sc I.~M. Sokolov, M.~Schmidt, and F.~Sagu{\'e}s}, {\em Reaction-subdiffusion
  equations}, Physical Review E, 73 (2006), p.~031102.

\bibitem{turing1952}
{\sc A.~M. Turing}, {\em The chemical basis of morphogenesis}, Philosophical
  Transactions of the Royal Society of London. Series B, Biological Sciences,
  237 (1952), pp.~37--72.

\bibitem{wu2018}
{\sc Y.~Wu, B.~Han, Y.~Li, E.~Munro, D.~J. Odde, and E.~E. Griffin}, {\em Rapid
  diffusion-state switching underlies stable cytoplasmic gradients in the
  caenorhabditis elegans zygote}, Proc Natl Acad Sci,  (2018), p.~201722162.

\bibitem{yang2021}
{\sc J.~Yang and D.~Jens}, {\em Reaction-subdiffusion systems and memory:
  spectra, turing instability and decay estimates}, IMA Journal of Applied
  Mathematics,  (2021), p.~hxaa044.

\end{thebibliography}
\bibliographystyle{siam}

\end{document}